%% file: main.tex
\documentclass[journal]{IEEEtran}

\usepackage{lipsum} 

\usepackage{comment}

\usepackage{cite}

\ifCLASSINFOpdf
  \usepackage[pdftex]{graphicx}
\else
\fi

\usepackage{color}

\usepackage{amsmath,amssymb,amsthm,mathtools}
\usepackage{bm}
\usepackage{bbm}
\usepackage{mathrsfs}
\usepackage{upgreek}
\usepackage{soul}
\usepackage{setspace}
\usepackage{cuted}
\usepackage{stfloats}
\usepackage{float}
\usepackage{colortbl}
\usepackage{algcompatible}
\usepackage{algorithm}
\usepackage{algpseudocode}
\algrenewcommand\algorithmicindent{0em}
\makeatletter
\algnewcommand{\LineComment}[1]{\Statex \hskip\ALG@thistlm #1}

\makeatother

\usepackage{slashbox}

\ifCLASSOPTIONcompsoc
  \usepackage[caption=false,font=normalsize,labelfont=sf,textfont=sf]{subfig}
\else
  \usepackage[caption=false,font=footnotesize]{subfig}
\fi

\usepackage[acronym,shortcuts]{glossaries}

\newacronym{5G}{5G}{fifth generation}
\newacronym{6G}{6G}{sixth generation}
\newacronym{ADR}{ADR}{antenna decentralized rate}
\newacronym{AER}{AER}{activity error rate}
\newacronym{AP}{AP}{access point}
\newacronym{ADC}{ADC}{analog-to-digital converter}
\newacronym{ABP}{ABP}{approximate BP}
\newacronym{ADD}{ADD}{annealed discrete denoiser}
\newacronym{AMP}{AMP}{approximate message passing}
\newacronym{AUD}{AUD}{active user detection}
\newacronym{AWGN}{AWGN}{additive white Gaussian noise}
\newacronym{BMMSE}{BMMSE}{Bussgang minimum mean square error}
\newacronym{BC}{BC}{belief combining}
\newacronym{BER}{BER}{bit error rate}
\newacronym{BiGAMP}{BiGAMP}{bilinear generalized approximate message passing}
\newacronym{BIP}{BIP}{bilinear inference problem}
\newacronym{BOD}{BOD}{Bayes-optimal denoiser}
\newacronym{GAMP}{GAMP}{generalized approximate message passing}
\newacronym{GF}{GF}{grant-free}
\newacronym{BiGaBP}{BiGaBP}{bilinear Gaussian belief propagation}
\newacronym{BP}{BP}{belief propagation}
\newacronym{BS}{BS}{base station}
\newacronym{CAP}{CAP}{central AP}
\newacronym{CAMP}{CAMP}{convolutional AMP}
\newacronym{CCU}{CCU}{central computing unit}
\newacronym{CDF}{CDF}{cumulative distribution function}
\newacronym{CDMA}{CDMA}{code division multiple access}
\newacronym{CLT}{CLT}{central limit theorem}
\newacronym{CPU}{CPU}{central processing unit}
\newacronym{CE}{CE}{channel estimation}
\newacronym{CF-mMIMO}{CF-mMIMO}{cell-free massive MIMO}
\newacronym{CG}{CG}{conjugate gradient}
\newacronym{CSI}{CSI}{channel state information}
\newacronym{CSIDCO}{CSIDCO}{complex SIDCO}
\newacronym{DCC}{DCC}{dynamic cooperation clustering}
\newacronym{DFT}{DFT}{discrete Fourier transform}
\newacronym{DoF}{DoF}{degrees of freedom}
\newacronym{DQ}{DQ}{De-quantization}
\newacronym{DSM}{DSM}{denoising score matching}
\newacronym{DL}{DL}{deep learning}
\newacronym{DU}{DU}{deep unfolding}
\newacronym{eMBB}{eMBB}{enhanced mobile broadband}
\newacronym{ECF}{ECF}{estimate-compress-forward}
\newacronym{EP}{EP}{expectation propagation}
\newacronym{EPA}{EPA}{approximate EP}
\newacronym{EXIT}{EXIT}{extrinsic information transfer}
\newacronym{FA}{FA}{false alarm}
\newacronym{FN}{FN}{factor node}
\newacronym{FG}{FG}{factor graph}
\newacronym{FTN}{FTN}{Faster-than-Nyquist}
\newacronym{GaBP}{GaBP}{Gaussian belief propagation}
\newacronym{GLM}{GLM}{generalized linear model}
\newacronym{IC}{IC}{interference cancellation}
\newacronym{IDD}{IDD}{iterative detection and decoding}
\newacronym{i.i.d.}{i.i.d.}{independent and identically distributed}
\newacronym{IoT}{IoT}{Internet of Things}
\newacronym{JACDE}{JACDE}{joint activity, channel and data estimation}
\newacronym{JACE}{JACE}{joint activity and channel estimation}
\newacronym{JCDE}{JCDE}{joint channel and data estimation}
\newacronym{KLD}{KLD}{Kullback-Leibler divergence}
\newacronym{LE}{LE}{linear estimator}
\newacronym{LSA}{LSA}{latent semantic analysis}
\newacronym{LMMSE}{LMMSE}{linear MMSE}
\newacronym{LM}{LM}{long memory}
\newacronym{MAC}{MAC}{multiple-access channel}
\newacronym{MAP}{MAP}{maximum \textit{a posteriori} probability}
\newacronym{MAMP}{MAMP}{memory AMP}
\newacronym{MCS}{MCS}{modulation and coding scheme}
\newacronym{MPDQ}{MPDQ}{message passing DQ}
\newacronym{MD}{MD}{miss-detection}
\newacronym{MF}{MF}{matched filter}
\newacronym{MFB}{MFB}{matched filter bound}
\newacronym{MM}{MM}{moment matching}
\newacronym{MNS}{MNS}{minimum norm solution}
\newacronym{MIMO}{MIMO}{multiple-input multiple-output}
\newacronym{MU-MIMO}{MU-MIMO}{multi-user MIMO}
\newacronym{MU}{MU}{multi-user}
\newacronym{mMIMO}{mMIMO}{Massive multiple-input multiple-output}
\newacronym{mMTC}{mMTC}{massive machine-type communications}
\newacronym{ML}{ML}{maximum likelihood}
\newacronym{MMSE}{MMSE}{minimum mean-square error}
\newacronym{MMV-AMP}{MMV-AMP}{multiple measurement vector approximate message passing}
\newacronym{MSE}{MSE}{mean square error}
\newacronym{MUD}{MUD}{multi-user detection}
\newacronym{NLE}{NLE}{nonlinear estimator}
\newacronym{NN}{NN}{neural network}
\newacronym{NR}{NR}{new radio}
\newacronym{NOMA}{NOMA}{non-orthogonal multiple access}
\newacronym{NMSE}{NMSE}{normalized mean square error}
\newacronym{OAMP}{OAMP}{orthogonal approximate message passing}
\newacronym{OFDM}{OFDM}{orthogonal frequency-division multiplexing}
\newacronym{PBD}{PBD}{parameterized Bayesian denoiser}
\newacronym{PDA}{PDA}{probabilistic data association}
\newacronym{PDF}{PDF}{probability density function}
\newacronym{PMF}{PMF}{probability mass function}
\newacronym{PIC}{PIC}{parallel interference cancellation}
\newacronym{PPP}{PPP}{Poisson point process}
\newacronym{PSK}{PSK}{phase-shift keying}
\newacronym{QP}{QP}{quadratic program}
\newacronym{QPSK}{QPSK}{quadrature phase-shift keying}
\newacronym{QAM}{QAM}{quadrature amplitude modulation}
\newacronym{SIDCO}{SIDCO}{sequential iterative decorrelation via convex optimization}
\newacronym{SD}{SD}{sphere decoding}
\newacronym{SE}{SE}{state evolution}
\newacronym{SGA}{SGA}{scalar Gaussian approximation}
\newacronym{SIC}{SIC}{soft interference cancellation}
\newacronym{SID}{SID}{self-iterative detection}
\newacronym{SNR}{SNR}{signal-to-noise ratio}
\newacronym{Soft IC}{Soft IC}{soft interference cancellation}
\newacronym{SotA}{SotA}{state-of-the-art}
\newacronym{SVD}{SVD}{singular value decomposition}
\newacronym{SPA}{SPA}{sum-product algorithm}
\newacronym{TX}{TX}{transmit}
\newacronym{T-OAMP}{T-OAMP}{trainable OAMP}
\newacronym{RX}{RX}{receive}
\newacronym{UAMP}{UAMP}{unitary AMP}
\newacronym{UE}{UE}{user equipment}
\newacronym{ULA}{ULA}{uniform linear array}
\newacronym{URA}{URA}{unsourced random access}
\newacronym{URLLC}{URLLC}{ultra reliable low latency communications}
\newacronym{VAMP}{VAMP}{vector AMP}
\newacronym{VGA}{VGA}{vector Gaussian approximation}
\newacronym{VN}{VN}{variable node}
\newacronym{w.r.t.}{w.r.t.}{with respect to}
\newacronym{WS}{WS}{warm-started}
\newacronym{ZF}{ZF}{zero-forcing}
\newacronym{flops}{flops}{floating point operations}
\newacronym{CS}{CS}{compressed sensing}
\newacronym{MP}{MP}{message passing}
\newacronym{MPA}{MPA}{message passing algorithm}
\newacronym{DNN}{DNN}{deep neural network}
\newacronym{ASB}{ASB}{adaptively scaled belief}
\newacronym{LLR}{LLR}{log-likelihood ratio}
\newacronym{BAd-VAMP}{BAd-VAMP}{bilinear adaptive vector AMP}
\newacronym{LIP}{LIP}{linear inference problem}
\newacronym{AoA}{AoA}{angle of arrival}
\newacronym{LS}{LS}{least square}
\newacronym{mmWave}{mmWave}{millimeter-wave}

\usepackage{amsthm}

\newtheoremstyle{italicstyle}
  {3pt} 
  {3pt} 
  {\normalfont} 
  {} 
  {\itshape}  
  {:} 
  {0.5em}  
  {\textit{\thmname{#1} \thmnumber{#2}\thmnote{ (#3)}}}
  
\theoremstyle{italicstyle}
\newtheorem{definition}{Definition}

\newtheorem{proposition}{Proposition}
\newtheorem{lemma}{Lemma}

\begin{document}
%

\title{
State-Evolution-based Score Matching\\
for Generalized Approximate Message Passing
}


\author{
Tomoharu~Furudoi,~\IEEEmembership{Graduate Student Member,~IEEE},
~Takumi~Takahashi,~\IEEEmembership{Member,~IEEE},\\
and~Hideki~Ochiai,~\IEEEmembership{Fellow,~IEEE}
\thanks{

The authors are with Graduate School of Engineering, The University of Osaka, 2-1 Yamada-oka, Suita, 565-0871, Japan (e-mail: furutomo@wcs.comm.eng.osaka-u.ac.jp, \{takahashi, ochiai\}@comm.eng.osaka-u.ac.jp).}
\vspace{-7mm}
}

%



\IEEEtitleabstractindextext{%
\vspace{-3ex}
\begin{abstract}
Generalized approximate message passing (GAMP) equipped with minimum mean-square error (MMSE) denoisers, commonly referred to as \emph{Bayes-GAMP}, is a powerful framework for solving inverse problems described by generalized linear models (GLMs) with arbitrary component-wise nonlinearities in the observation process.
However, despite its theoretical tractability and rigorously established asymptotic optimality, the range of practical observation models for which Bayes-GAMP admits a closed-form implementation remains severely limited, particularly in complex-valued settings.
This limitation largely stems from the restrictive requirement that the corresponding output denoiser, given by a conditional expectation, admit a closed-form expression.
To overcome this limitation, we propose a principled approach that enables the implementation of Bayes-GAMP for complex-valued models with \emph{virtually arbitrary} nonlinear observation mappings.
Specifically, within a score-matching framework, we train a neural network to emulate the output denoiser using training data generated from a characterization of the message dynamics based on state evolution (SE).
Notably, the proposed approach requires neither explicit evaluation of the denoiser nor knowledge of an explicit functional form of the nonlinear mapping; it requires only access to forward evaluations of the mapping during offline training.
We show that, under ideal training conditions, GAMP with the trained network replacing the analytically intractable denoiser asymptotically matches the performance of Bayes-GAMP with the exact denoiser.
%
\end{abstract}
\begin{IEEEkeywords}
Generalized approximate message passing, state evolution, denoising score matching, generalized linear model, denoiser.
\end{IEEEkeywords}
}

\maketitle

\IEEEdisplaynontitleabstractindextext

%
\IEEEpeerreviewmaketitle

\glsresetall

\vspace{-3mm}

\section{Introduction}
\label{chap: intro}
\input{TXT/intro}

\section{Preliminaries}
\input{TXT/prelim}

\section{Generalized Approximate Message Passing}
\input{TXT/gamp}

\section{SE-DSM for Output Denoiser Emulation}
\label{chap: se}
\input{TXT/se_dsm}

\vspace{-1ex}
\section{Proof Sketch of Proposition~\ref{prop: se_dsm}}
\label{chap: numerical_result}
\input{TXT/proof}

\section{Conclusion}
\input{TXT/conc}

\section*{Acknowledgment}
The authors thank Professor Tadashi Wadayama of Nagoya Institute of Technology for valuable discussions on score-based inference and related topics, which provided useful background for this work.
This work was supported in part by JSPS KAKENHI Grant Numbers JP26K00949 and JP25H01111.
The first author was supported by JSPS KAKENHI Grant Number JP26KJ1668.

\appendices
\input{TXT/appendix}


\bibliographystyle{REF/IEEEtran}
\bibliography{REF/IEEEabrv,REF/conf_abbrv,REF/ref}

\end{document}

%% file: TXT/intro.tex
Consider the problem of estimating the signal $\bm{x} \in \mathbb{C}^{M \times 1}$ under the following \ac{GLM}~\cite{Pace1997book}:
\begin{equation}
    \label{equ: glm}
    \bm{z} = \bm{A} \bm{x} \in \mathbb{C}^{N \times 1},
    \quad
    \bm{y} = h(\bm{z}, \bm{w}),
\end{equation}
where $\bm{y} \in \mathbb{C}^{N \times 1}$ is the observation vector, $\bm{A} \in \mathbb{C}^{N \times M}$ is the known measurement matrix,
$h: \mathbb{C}^2 \to \mathbb{C}$ represents a nonlinear observation mapping\footnote{
Throughout this paper, a scalar-valued function applied to vector-valued arguments is understood to act component-wise.
Specifically, for
$f:\mathbb{C}^{L}\rightarrow\mathbb{C}$
and vectors
$\bm{c}_{1},\bm{c}_{2},\ldots,\bm{c}_{L}\in\mathbb{C}^{K\times1}$,
we define
$
[f(\bm{c}_{1},\bm{c}_{2},\ldots,\bm{c}_{L})]_k
=
f\!\left(
[\bm{c}_{1}]_k,
[\bm{c}_{2}]_k,
\ldots,
[\bm{c}_{L}]_k
\right),
\,\, k=1,2,\ldots,K.
$
},
and $\bm{w} \in \mathbb{C}^{N \times 1}$ is an unknown noise vector.
For simplicity, $\bm{x}$ and $\bm{w}$ are each assumed to have \ac{i.i.d.} entries throughout this paper.

Given perfect knowledge of the nonlinear mapping $h$ and the probability distributions of $\bm{x}$ and $\bm{w}$, the \ac{MMSE} estimate of $\bm{x}$ can, in principle, be obtained as its posterior mean.
However, direct evaluation of the posterior mean generally entails high-dimensional integration and is computationally prohibitive for large $N$ and $M$.
%
%
%
In this context, \ac{BP}~\cite{Pearl1982, Kschischang2001} provides a general framework for approximate Bayesian inference.
\Ac{GAMP}~\cite{Rangan2011, Rangan2012arxiv} is a low-complexity approximation to \ac{BP} with a per-iteration computational complexity of order $\mathcal{O}(MN)$.
When equipped with \ac{MMSE} denoisers, \ac{GAMP}, often referred to as \emph{Bayes-\ac{GAMP}}, is particularly appealing because it has been rigorously proven to asymptotically achieve Bayes-optimal (\ac{MMSE}) performance in the large-system limit\footnote{
The large-system limit refers to the asymptotic regime of \eqref{equ: glm} in which $N$ and $M$ tend to infinity while their ratio $\xi \triangleq N/M$ remains fixed.} under suitable technical conditions.

Despite its theoretical tractability and the generality of the \ac{GLM} setting, closed-form implementations of the corresponding Bayes-\ac{GAMP} update rules are feasible only in limited scenarios in practice.
In many cases, this difficulty stems from denoiser design.
Specifically, \ac{GAMP} for \acp{GLM} uses two types of denoisers: an input denoiser and an output denoiser.
These denoisers iteratively compute component-wise conditional expectations associated with the input and output signals $\bm{x}$ and $\bm{z}$, respectively.
Of the two, the output denoiser accounts for the nonlinear observation model and constitutes the main analytical bottleneck addressed in this paper.
For $n=1,2,\ldots,N$, let $z_n$, $y_n$, and $w_n$ denote the $n$-th entries of $\bm{z}$, $\bm{y}$, and $\bm{w}$, respectively.
More specifically, closed-form evaluation of the Bayes-optimal output denoiser generally requires both of the following conditions:
\begin{enumerate}
    \item[(C1)] The likelihood $p_{\mathsf{y}_n \mid \mathsf{z}_n}(y_n \mid \cdot)$ induced by the nonlinear mapping $y_n = h(z_n, w_n)$ is analytically available.
    \item[(C2)] The convolution of $p_{\mathsf{y}_n \mid \mathsf{z}_n}(y_n \mid \cdot)$ with a noncentral Gaussian \ac{PDF} over $z_n$ admits a closed-form expression.
\end{enumerate}
%
%
Except for a few canonical cases, such as quantization~\cite{Kamilov2012, Wen2016, Yang2020, Takeuchi2025}, clipping~\cite{Zhidkov2019vtc, Yang2021, Chi2024}, and phase retrieval~\cite{Schniter2015, Zhu2019spl, Ma2019TIT}, (C1) and (C2) are generally difficult to satisfy.
For Bayes-\ac{GAMP} in the complex domain, (C2) poses an even more stringent challenge.
Indeed, when $h$ acts non-separably on the real and imaginary parts, as in amplitude clipping in the complex plane~\cite{Zhidkov2019vtc, Yang2021}, one must resort to either denoiser approximation~\cite{Yang2021} or numerical integration~\cite{Zhidkov2019vtc}.
Such remedies, however, are not always practical and tend to sacrifice either the low computational complexity or the estimation accuracy of \ac{GAMP}.
This limitation is particularly restrictive in communication applications, where amplitude and phase serve as fundamental degrees of freedom.

The primary objective of this paper is to provide a simple and principled procedure for training a \ac{NN} to emulate the behavior of the output denoiser, even when its closed-form expression is unavailable.
This enables, in principle, the implementation of Bayes-\ac{GAMP} for \emph{virtually arbitrary} nonlinear mappings $h$, without being constrained by the availability of a closed-form denoiser expression.
The proposed strategy consists of three steps:
\begin{enumerate}
    \item[(S1)] Express the Bayes-\ac{GAMP} output denoiser in terms of a score function.
    \item[(S2)] Train an \ac{NN} using a score-matching objective~\cite{Hyvarinen2005, Vincent2011} to approximate the target score function.
    \item[(S3)] Use the trained network as a surrogate output denoiser in \ac{GAMP}, thereby avoiding the need for an explicit expression for the likelihood $p_{\mathsf{y}_n \mid \mathsf{z}_n}(y_n \mid \cdot)$ or evaluation of its associated convolution integrals.
\end{enumerate}
%
Although conceptually straightforward, this strategy faces a fundamental challenge: the loss function in the score-matching step (S2), defined as the \ac{MSE} between the \ac{NN} output and the target score, depends on the distribution of the messages propagated by \ac{GAMP}. 
Consequently, training the \ac{NN} requires an accurate characterization of the statistics of the messages that Bayes-\ac{GAMP} would generate with the analytically intractable output denoiser.

A crucial role in addressing this fundamental issue is played by the \emph{\ac{SE}} formalism, a rigorous framework developed for characterizing the asymptotic behavior of several classes of \acp{MPA}~\cite{Donoho2009, Bayati2011}.
Under the \ac{SE} analysis tailored to \ac{GAMP}~\cite{Rangan2011, Rangan2012arxiv, Javanmard2013, Feng2022}, the statistical dynamics of the messages propagated in \ac{GAMP} can be asymptotically characterized by an equivalent scalar system in the sense of \emph{empirical convergence}~\cite{Rangan2012arxiv, Rangan2019}.
This scalar model allows us to generate training data that faithfully reflect the asymptotic statistical structure of the messages, without explicitly running the high-dimensional \ac{GAMP} algorithm itself.

However, another fundamental hurdle still remains:
when the output denoiser is not available in closed form, the corresponding true score function is also intractable; hence, score matching is not directly applicable.
To overcome this issue, we propose a novel reformulation of the loss function in (S2) that enables \ac{NN} training without requiring explicit knowledge of the true score function.
%
Consequently, even though the equivalent scalar system induced by \ac{SE} differs from the standard \ac{AWGN} corruption model typically assumed in the context of score-based generative modeling~\cite{Song2019} and conventional \emph{\ac{DSM}}~\cite{Vincent2011}, one obtains a nontrivial result reminiscent of \ac{DSM}: a surrogate score function can be obtained \emph{simply} by training an \ac{NN} to predict artificially injected Gaussian noise from the generated training data.
Importantly, this training procedure does not require \emph{any} analytical knowledge of the nonlinear mapping $h$ (\textit{e.g.}, its explicit functional form); it only requires forward evaluations of $h$.
These results imply that Bayes-\ac{GAMP} can be implemented for a significantly broader class of nonlinear mappings $h$, possibly including black-box ones, even when the corresponding output denoiser is analytically intractable.
Moreover, given the potential Bayes-optimality of Bayes-\ac{GAMP} in the large-system limit under suitable idealized conditions, the proposed strategy may enable near-optimal signal estimation in much broader scenarios in which conventional implementations of Bayes-\ac{GAMP} have been infeasible.

The main contributions of this paper are summarized as follows.
\begin{itemize}
\item We propose a principled framework for training an \ac{NN} that emulates the output denoiser of Bayes-\ac{GAMP} without requiring a closed-form denoiser expression or even an explicit functional form of the nonlinear mapping $h$.
To this end, we provide a systematic procedure for generating training data based on the \ac{SE} formalism.
This establishes a new use of \ac{SE}, extending it beyond its conventional role as a tool for analyzing the asymptotic convergence behavior of \acp{MPA}.
To the best of our knowledge, this is the first work to leverage \ac{SE} to generate training data for the data-driven emulation of an analytically intractable output denoiser.

\item We develop a novel score-matching-based training methodology for emulating analytically intractable output denoisers.
Specifically, we reformulate the output denoiser in terms of a score function, thereby facilitating the training of an \ac{NN} within the framework of score matching~\cite{Hyvarinen2005}.
By carefully reformulating the resulting loss function, we prove that the training procedure reduces to a regression problem in which the network predicts artificially injected Gaussian noise, rather than the true score function.
Although this result bears a resemblance to the \ac{DSM} framework~\cite{Vincent2011}, it is by no means obvious because the training data generated through \ac{SE} do not follow the standard scalar \ac{AWGN} model assumed in conventional \ac{DSM}.
We refer to this training framework as \emph{\ac{SE}-based \ac{DSM} (SE-DSM)}.

\item We prove that, under ideal SE-DSM training, \ac{GAMP} equipped with the trained \ac{NN} asymptotically achieves the same performance as Bayes-\ac{GAMP} equipped with the exact but analytically intractable output denoiser.
Since SE-DSM training is performed entirely offline, the proposed framework offers a practical route toward efficiently implementing Bayes-\ac{GAMP} for a \emph{virtually arbitrary} nonlinear mapping $h$ while largely preserving the appealing practical advantages of \ac{GAMP}.
\end{itemize}

\subsection{Related Works}

The studies in~\cite{Wadayama2026scvamp, Wadayama2026tmvamp} adopt a strategy closely related to that of the present paper.
Specifically, they consider signal estimation problems based on a special class of real-valued \acp{GLM}
(\textit{i.e.}, $\bm{y} = h(\bm{A}\bm{x}) + \bm{w}$, where $h: \mathbb{R} \to \mathbb{R}$ denotes a nonlinear function and $\bm{w}$ is a Gaussian noise vector) and solve them using a score-function-based reformulation of a variant of \ac{VAMP}~\cite{Schniter2016, Rangan2019}.
The \ac{NN} training in these works builds upon a heuristic assumption that the messages propagated to the output denoiser can be characterized by a standard \ac{AWGN} model.
Under this assumption, the classical result of \ac{DSM}~\cite{Vincent2011} can be directly applied.
Consequently, \acp{NN} can be trained even when the true score function, or equivalently the corresponding denoiser, is analytically unavailable.

However, the theoretical validity of this heuristic assumption is not established in~\cite{Wadayama2026scvamp, Wadayama2026tmvamp}.
In fact, the same assumption does not generally hold for the output denoiser of \ac{GAMP}, resulting in a mismatch between the statistical structure of the propagated messages and that of the generated training data.
As a result, the learned mapping may no longer correspond to the true score function, leading to performance degradation when it is plugged into \ac{GAMP}.
In contrast, the proposed SE-DSM framework leverages the \ac{SE} formalism to generate training data that accurately reflect the asymptotic statistical structure of the propagated messages.

On the other hand, score-based \acp{MPA} have also been investigated in~\cite{Yu2024, Cai2025spawc}, particularly in the context of \ac{OAMP}~\cite{Ma2017}.
These works focus on the standard linear model and learn denoisers for input signals with unknown prior distributions using the \ac{DSM} framework.
Consequently, their objective is fundamentally different from that of the present work, which aims to emulate analytically intractable output denoisers and establish a theoretically justified training framework in the more general \ac{GLM} setting.

\textit{Notation}: 
Vectors and matrices are denoted by lowercase and uppercase boldface letters, respectively. 
The sets of nonnegative real, real, and complex numbers are denoted by $\mathbb{R}_+$, $\mathbb{R}$, and $\mathbb{C}$, respectively.
The identity matrix of size $K$ and the zero vector are denoted by $\bm{I}_K$ and $\bm{0}$, respectively.
The $k$-th entry of $\bm{c} \in \mathbb{C}^{K \times 1}$ is denoted by $[\bm{c}]_k$, while the $(k, l)$-th element of $\bm{C} \in \mathbb{C}^{K \times L}$ is denoted by $[\bm{C}]_{k, l}$.
The conjugate, transpose, and conjugate transpose are denoted by $(\cdot)^*$, $(\cdot)^{\mathsf{T}}$, and $(\cdot)^{\mathsf{H}}$, respectively.
The 
imaginary unit is defined as $\mathrm{j} \triangleq \sqrt{-1}$.
We define $\langle \bm{c} \rangle \triangleq \frac{1}{K} \sum_{k=1}^{K} [\bm{c}]_k $ and $\|\bm{c}\|^2 \triangleq \sum_{k=1}^{K} |[\bm{c}]_k|^2$ for all $\bm{c} \in \mathbb{C}^{K \times 1}$.
The circularly symmetric complex Gaussian distribution with mean vector $\bm{\mu}$ and covariance matrix $\bm{\varLambda}$ is denoted by $\mathcal{CN}\left(\bm{\mu}, \bm{\varLambda}\right)$.
The notation $A \sim \mathcal{P}$ denotes that a random variable $A$ follows the distribution $\mathcal{P}$, whereas $A \sim B$ denotes that $A$ and $B$ are identically distributed.
The notation $\{A_k\}_{k=1}^{K} \overset{\mathrm{i.i.d.}}{\sim} A$ means that $A_1, A_2, \ldots, A_K$ are independent copies of $A$.
The density function, with respect to an appropriate reference measure, and expectation for a random variable $A$ are denoted by $p_{\mathsf{A}}(\cdot)$ and $\mathbb{E}_{\mathsf{A}}[\cdot]$, respectively.
The conditional density of $A$ given $B=b$ and the conditional expectation given $B$ are denoted by $p_{\mathsf{A} \mid \mathsf{B}}(\cdot \mid b)$ and $\mathbb{E}_{\mathsf{A}}[\cdot \mid B]$, respectively. %
For notational simplicity, all integrals over $\mathbb{C}$ are understood with respect to the two-dimensional Lebesgue measure, \textit{i.e.},
$
\int_{\mathbb{C}} \cdots \, \mathrm{d}a
=
\int_{\mathbb{R}^2} \cdots \, \mathrm{d}a^{\Re}\mathrm{d}a^{\Im}
$
for $a=a^{\Re}+\mathrm{j}a^{\Im} \in \mathbb{C}$.
Finally, $ \overset{\mathrm{a.s.}}{\to}$ and $\overset{\mathrm{a.s.}}{=}$ indicate almost-sure convergence and equality, respectively.

%% file: TXT/prelim.tex
As preparation for the subsequent analysis, this section introduces several mathematical definitions and technical assumptions associated with the signal model in \eqref{equ: glm}.

For notational convenience, we denote
$a_{n,m} \triangleq [\bm{A}]_{n,m}$,
$x_m \triangleq [\bm{x}]_m$,
$y_n \triangleq [\bm{y}]_n$,
$z_n \triangleq [\bm{z}]_n$,
and
$w_n \triangleq [\bm{w}]_n$
for $n = 1, 2, \ldots,N$ and $m = 1, 2, \ldots,M$.

\subsection{Definitions}

Following \cite{Takeuchi2020, Ma2019TIT}, we first introduce the Wirtinger derivatives with respect to a complex quantity $c \in \mathbb{C}$ and its conjugate $c^*$.

\begin{definition}[Wirtinger Derivative]
For a complex variable $c = c^\Re + \mathrm{j}c^\Im \in \mathbb{C}$ and its conjugate $c^*$, the Wirtinger derivatives are defined by
\begin{equation}
\!
\frac{\partial}{ \partial c}
=
\frac{1}{2}  
\left(\frac{\partial }{ \partial c^\Re} - \mathrm{j}\frac{\partial }{ \partial c^\Im} \right) \!, \,
\frac{\partial }{ \partial c^*}
=
\frac{1}{2}
\left(\frac{\partial }{ \partial c^\Re} + \mathrm{j}\frac{\partial }{ \partial c^\Im} \right).
\end{equation}
Throughout this paper, the notation 
$ f'(c_1, c_2, \ldots , c_L) \triangleq
\frac{\partial}{\partial c_1}  f(c_1, c_2, \ldots , c_L) $ denotes the Wirtinger derivative with respect to the first argument for any differentiable function $f: \mathbb{C}^L \to \mathbb{C}$.
\end{definition}

The following two definitions are useful to formally state the existing results of \ac{SE}-based analysis.

\begin{definition}[Pseudo-Lipschitz Continuity~\cite{Bayati2011, Takeuchi2020}]
A function $f: \mathbb{C}^L \to \mathbb{C}$ is said to be \emph{pseudo-Lipschitz of order $p$} if there exists a constant $C \, (< \infty)$ such that, for any $\bm{r}, \bm{s} \in \mathbb{C}^{L \times 1}$, the following inequality holds.
\begin{equation}
\left| f(\bm{r}) - f(\bm{s}) \right|
\le
C \, \|\bm{r} - \bm{s}\|
\bigl( 1 + \|\bm{r}\|^{p-1} + \|\bm{s}\|^{p-1} \bigr).
\end{equation}
When $p=1$, this definition reduces to that of Lipschitz continuity~\cite{Rangan2019}.
\end{definition}

\begin{definition}[Empirical Convergence~\cite{Rangan2012arxiv, Rangan2019}]
Let $K$ be a positive integer satisfying $\lim_{N = \xi M \to \infty} K/M \in (0, \infty)$, \textit{i.e.}, $K$ grows at the same rate as $N$ and $M$ in the large-system limit $N = \xi M \to \infty$.
Then, a sequence of random variables
\begin{equation}
    \{r_{k, 1}(M), r_{k, 2}(M), \ldots , r_{k, L}(M)\}_{k
    =1}^{K}
\end{equation}
is said to \emph{converge empirically} to a tuple of random variables $(R_1, R_2, \ldots , R_L)$ \emph{in the pseudo-Lipschitz sense of order} $p \, (\geq 2)$ if
\begin{align}
&
\frac{1}{K}
\sum_{k=1}^{K} \phi \left(r_{k, 1}(M), r_{k, 2}(M), \ldots , r_{k, L}(M) \right)
\nonumber
\\
&
\!\!\!\!\!\!\!\!
\overset{\mathrm{a.s.}}{\to}
\mathbb{E}_{ \mathsf{R}_1, \mathsf{R}_2, \ldots ,  \mathsf{R}_L} \left[ \phi(R_1, R_2, \ldots , R_L) \right] < \infty
\label{equ: def_empirical_convergence}
\end{align}
holds almost surely in the large-system limit $N = \xi M \to \infty$ for any pseudo-Lipschitz test function $\phi: \mathbb{C}^{L} \to \mathbb{C}$ of order $p \, (\geq 2)$.
We denote this convergence in this work by
\begin{equation}
\{ r_{k, 1}, r_{k, 2}, \ldots , r_{k, L} \}_{k=1}^{K} \overset{\mathrm{PL}(p)}{ \longrightarrow } (R_1, R_2, \ldots , R_L),
\end{equation}
where the dependence of the random variables on $M$ in \eqref{equ: def_empirical_convergence} is omitted.
\end{definition}

\subsection{Assumptions on the Signal Model}
\label{chap: signal_model_assumption}

To facilitate the subsequent \ac{SE}-based discussions, we introduce the following technical assumptions on the signal model in \eqref{equ: glm}:
\begin{itemize}
\item Each entry of the measurement matrix $\bm{A}$ is independently drawn from $\mathcal{CN}(0, 1/N)$.

\item For simplicity, we assume that there exist independent random variables $X$ and $W$ such that $\{x_m\}_{m=1}^{M} \overset{\mathrm{i.i.d.}}{\sim} X$ and $\{w_n\}_{n=1}^{N} \overset{\mathrm{i.i.d.}}{\sim} W$.
Moreover, for some integer $\ell \geq 2$, both $\mathbb{E} \left[ |X|^{2\ell-2} \right]$ and $\mathbb{E} \left[ |W|^{2\ell-2} \right]$ are finite.

\item We assume that $X$ and $W$ are zero-mean random variables, \textit{i.e.},
$\mathbb{E}_{\mathsf{X}}[X] = 0$ 
and
$\mathbb{E}_{\mathsf{W}}[W] = 0$.
Also, the variance of $X$ is denoted by
$ \sigma_{\mathrm{x}}^2 \triangleq
\mathbb{E}_{\mathsf{X}} \bigl[ |X|^2 \bigr] \, (< \infty) $.
\end{itemize}

%% file: TXT/gamp.tex
This subsection briefly reviews the algorithmic structure of complex-valued Bayes-\ac{GAMP}, which estimates $\bm{x}$ based on the \ac{GLM} in \eqref{equ: glm}.
We then heuristically derive its simplified form in the large-system limit through several approximations.

\vspace{-1ex}
\subsection{Bayes-GAMP}

First, we present the pseudocode of complex-valued Bayes-\ac{GAMP} in Algorithm~\ref{alg: gamp}, which is equivalent to Table~I in \cite{Schniter2015}.
We refer to the $l$-th line of Algorithm~$a$ as (A$a$-$l$) throughout the remainder of this paper, and the notation $(\cdot)(t)$ denotes the value at the $t$-th iteration for $t  = 1, 2, \ldots , T$.

Algorithm~\ref{alg: gamp} consists of two modules, each referred to as \emph{output module} and \emph{input module}, respectively.
The first step of the output module in (A\ref{alg: gamp}-3, 4) is to combine the tentative estimate of $\bm{x}$ and its estimated \ac{MSE} across the input domain $m = 1, 2, \ldots , M$ to yield the \emph{effective} prior information for $z_n$ as
$\left\{\overline{z}_n (t), \overline{v}_n^{\mathrm{z}} (t) \right\}_{n=1}^{N}$.
These variables are used to calculate the tentative estimate of $\bm{z}$ in (A\ref{alg: gamp}-5, 6).
More specifically, the function called \emph{output denoiser} $\eta_{\mathrm{out}}: (\mathbb{C} \times \mathbb{R}_+ \times \mathbb{C}) \to \mathbb{C}$ postulates the scalar \emph{Gaussian-prior model} shown in Fig.~\ref{fig: scalar_models}(a) for a set of $ \left( \overline{z}_n, \overline{v}_n^{\mathrm{z}}, y_n \right) $ independently for $n = 1, 2, \ldots , N$, where
$\overline{z}_n$
and
$\overline {v}_n^{\mathrm z}$
serve as the prior mean and variance, respectively.
Then, it computes the \ac{MMSE} estimate of $z_n$ as
\begin{align}
&
\hat{z}_n =
\eta_{\mathrm{out}} (\overline{z}_n; \overline{v}_n^{\mathrm{z}}, y_n)
\nonumber
\\
&
\!\!\!\!
\triangleq
\frac{1}{p_{\mathsf{y}_n \mid \overline{\mathsf{z}}_n }
(y_n \mid \overline{z}_n)}
\int_{\mathbb{C}} z \cdot p_{\mathsf{y}_n \mid \mathsf{z}_n }(y_n \mid z) 
\,
\frac{1}{\pi \overline{v}_n^{\mathrm{z}}}
e^{-\frac{|\overline{z}_n - z|^2}{\overline{v}_n^{\mathrm{z}}}}
\,
\mathrm{d}z,
\label{equ: def_bo_out_denoiser}
\end{align}
with the normalization factor
\begin{equation}
p_{\mathsf{y}_n \mid \overline{\mathsf{z}}_n }
(y_n \mid \overline{z}_n)
=
\int_{\mathbb{C}}
p_{\mathsf{y}_n \mid \mathsf{z}_n }(y_n \mid z) 
\,
\frac{1}{\pi \overline{v}_n^{\mathrm{z}}}
e^{-\frac{|\overline{z}_n - z|^2}{\overline{v}_n^{\mathrm{z}}}}
\, \mathrm{d}z.
\label{equ: p_y_z_ol}
\end{equation}
Note that (A\ref{alg: gamp}-6) follows from the  identity of $\eta_{\mathrm{out}}$:
\begin{align}
&\overline{v}_n^{\mathrm{z}}
\cdot
\eta_{\mathrm{out}}' (\overline{z}_n; \overline{v}_n^{\mathrm{z}}, y_n) 
\nonumber
\\
&
\!\!\!
= 
\frac{1}{p_{\mathsf{y}_n \mid \overline{\mathsf{z}}_n }
(y_n \mid \overline{z}_n)}
\int_{\mathbb{C}} |z|^2 \cdot p_{\mathsf{y}_n \mid \mathsf{z}_n }(y_n \mid z) 
\,
\frac{1}{\pi \overline{v}_n^{\mathrm{z}}}
e^{-\frac{|\overline{z}_n - z|^2}{\overline{v}_n^{\mathrm{z}}}}
\,
\mathrm{d}z
\nonumber
\\
&
\qquad\qquad\qquad\qquad
\qquad\qquad \quad
-\,|\eta_{\mathrm{out}}(\overline{z}_n; \overline{v}_n^{\mathrm{z}}, y_n)|^2.
\label{equ: eta_out_identity}
\end{align}
Finally, the output module passes messages to the input module through (A\ref{alg: gamp}-7, 8).

\input{ALG/gamp}

Subsequently, the input module first constructs the \emph{effective} likelihood information as $\left\{\overline{x}_m(t), \overline{v}_m^{\mathrm{x}}(t) \right\}_{m=1}^{M}$ in (A\ref{alg: gamp}-9, 10) by combining the propagated messages from the output module across the output domain $n = 1, 2,  \ldots , N$.
Treating these variables as observations generated from the scalar AWGN model shown in Fig.~\ref{fig: scalar_models}(b), the \emph{input denoiser} $\eta_{\mathrm{in}}: (\mathbb{C}\times \mathbb{R}_+) \to \mathbb{C}$ calculates the tentative \ac{MMSE} estimate of $\bm{x}$ independently for $m = 1, 2, \ldots , M$ as
\begin{align}
&
\hat{x}_m =
\eta_{\mathrm{in}}(\overline{x}_m; \overline{v}_m^{\mathrm{x}}) 
\nonumber
\\
&
\quad
\triangleq
\frac{1}{p_{\overline{\mathsf{x}}_m} (\overline{x}_m) }
\int_{\mathbb{C}} x \cdot p_{\mathsf{x}_m}(x)
\,
\frac{1}{\pi \overline{v}_m^{\mathrm{x}}}
e^{-\frac{|\overline{x}_m - x|^2}{\overline{v}_m^{\mathrm{x}}}}
\,
\mathrm{d}x,
\end{align}
with the normalizing factor
\begin{equation}
p_{\overline{\mathsf{x}}_m} (\overline{x}_m) =
\int_{\mathbb{C}} p_{\mathsf{x}_m}(x)
\,
\frac{1}{\pi \overline{v}_m^{\mathrm{x}}}
e^{-\frac{|\overline{x}_m - x|^2}{\overline{v}_m^{\mathrm{x}}}}
\,
\mathrm{d}x.
\end{equation}
Similarly to \eqref{equ: eta_out_identity}, (A\ref{alg: gamp}-12) follows from 
\begin{align}
&
\overline{v}_m^{\mathrm{x}}
\cdot
\eta_{\mathrm{in}}'(\overline{x}_m; \overline{v}_m^{\mathrm{x}}) 
\nonumber
\\
&
\quad
=
\frac{1}{p_{\overline{\mathsf{x}}_m} (\overline{x}_m) }
\int_{\mathbb{C}} |x|^2 \cdot p_{\mathsf{x}_m}(x)
\,
\frac{1}{\pi \overline{v}_m^{\mathrm{x}}}
e^{-\frac{|\overline{x}_m - x|^2}{\overline{v}_m^{\mathrm{x}}}}
\,
\mathrm{d}x
\nonumber
\\
&
\qquad \qquad \qquad \qquad \qquad \qquad \qquad
- \,
|\eta_{\mathrm{in}}(\overline{x}_m; \overline{v}_m^{\mathrm{x}})|^2,
\end{align}
which corresponds to the complex-valued version of equation (32) in \cite{Rangan2012arxiv}.

For a detailed explanation of Algorithm~\ref{alg: gamp}, including its derivation from loopy \ac{BP}, interested readers are referred to \cite{Rangan2011, Rangan2012arxiv, Zou2018, Meng2018, Liu2019tvt}.


\begin{figure}[t]
\centering
    \subfloat[Scalar Gaussian-prior model.]{
        \includegraphics[width=0.9\linewidth]{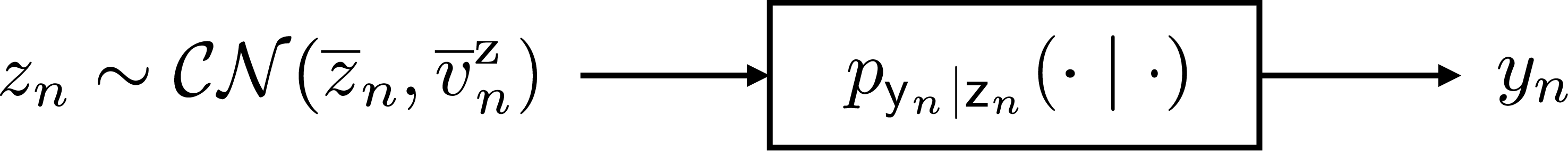}
        \label{fig: gaussian_prior_model}
    }
    \vspace{3mm}
    
    \subfloat[Scalar \ac{AWGN} model.]{
        \includegraphics[width=0.65\linewidth]{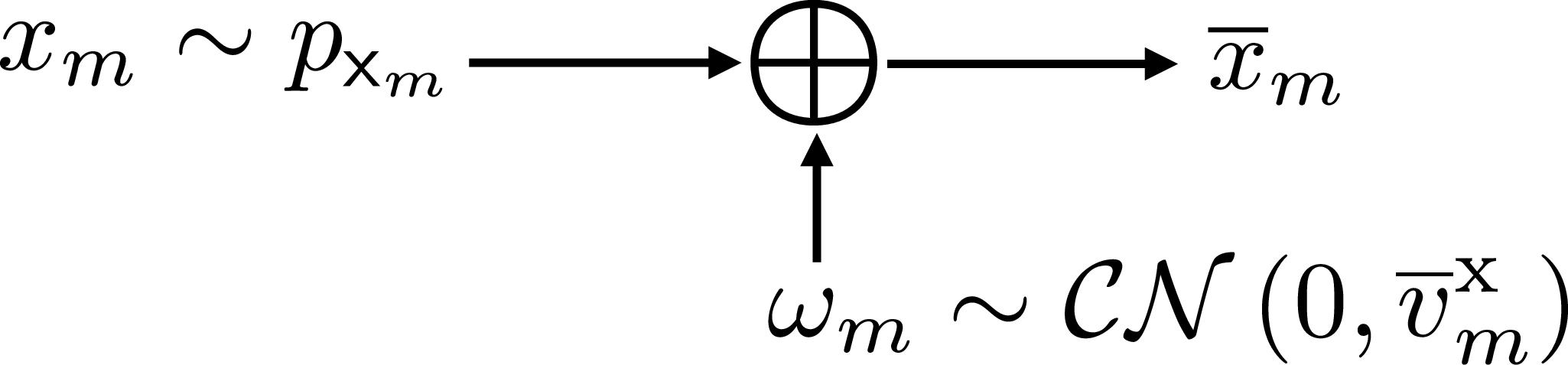}
        \label{fig: awgn_model}
    }
    \vspace{3mm}
    \caption{
    Scalar models based on which the denoisers operate:
    (a) Gaussian-prior model for the output denoiser
    $\eta_{\mathrm{out}}(\overline{z}_n; \overline{v}_{n}^{\mathrm{z}}, y_n)$
    and
    (b) \ac{AWGN} model for the input denoiser
    $\eta_{\mathrm{in}}(\overline{x}_m; \overline{v}_m^{\mathrm{x}})$.
    }
    \label{fig: scalar_models}
\end{figure}

\subsection{Reformulation of Bayes-GAMP via a Score Function}

In what follows, our goal is to simplify Algorithm~\ref{alg: gamp} to facilitate its \ac{SE}-based analysis.
As a first step toward this goal, we reformulate the output denoiser in Algorithm~\ref{alg: gamp} in terms of the \emph{score function} $ \frac{\partial}{\partial \overline{z}^*} \log p_{\mathsf{y}_n \mid \overline{\mathsf{z}}_n} (y \mid \overline{z}) $.

\begin{lemma}[Output Denoiser as a Score Function]
The output denoiser $\eta_{\mathrm{out}}$ defined in \eqref{equ: def_bo_out_denoiser} satisfies the following relationship:
\begin{equation}
\eta_{\mathrm{out}}(\overline{z}; \overline{v}) 
=
\overline{z} + \overline{v} \cdot
\underbrace{
\frac{\partial}{\partial \overline{z}^*}
\log{
p_{\mathsf{y}_n \mid \overline{\mathsf{z}}_n }
(y \mid \overline{z})}}_{\textrm{Score Function}}.
\label{equ: tweedie_output}
\end{equation}
\label{lem: tweedie_output}
\end{lemma}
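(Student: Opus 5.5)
This is a complex-valued Tweedie formula, and the plan is to verify it by differentiating the normalization factor \eqref{equ: p_y_z_ol} under the integral sign with respect to $\overline{z}^*$, treating $\overline{z}$ and $\overline{z}^*$ as independent variables in the Wirtinger sense.

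First write the Gaussian kernel as
\[
g(\overline{z}, z) \triangleq \frac{1}{\pi \overline{v}} e^{-\frac{(\overline{z}-z)(\overline{z}-z)^*}{\overline{v}}}.
\]
Applying the Wirtinger derivative $\frac{\partial}{\partial \overline{z}^*}$ to $|\overline{z}-z|^2=(\overline{z}-z)(\overline{z}^*-z^*)$ gives $\overline{z}-z$. Hence
\[
\frac{\partial}{\partial \overline{z}^*} g(\overline{z},z) = -\frac{\overline{z}-z}{\overline{v}}\, g(\overline{z},z) = \frac{z-\overline{z}}{\overline{v}}\, g(\overline{z},z).
\]
One can check this directly from the real-coordinate definition in the Wirtinger Derivative definition: $\frac{1}{2}(\partial_{\Re}+\mathrm{j}\partial_{\Im})$ applied to $(a^\Re)^2+(a^\Im)^2$ yields $a^\Re+\mathrm{j}a^\Im$.

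Next, differentiate \eqref{equ: p_y_z_ol} under the integral:
\[
\frac{\partial}{\partial \overline{z}^*} p_{\mathsf{y}_n\mid\overline{\mathsf{z}}_n}(y\mid\overline{z}) = \frac{1}{\overline{v}}\int_{\mathbb{C}} (z-\overline{z})\, p_{\mathsf{y}_n\mid\mathsf{z}_n}(y\mid z)\, g(\overline{z},z)\,\mathrm{d}z .
\]
Dividing by $p_{\mathsf{y}_n\mid\overline{\mathsf{z}}_n}(y\mid\overline{z})$ and using $\frac{\partial}{\partial\overline{z}^*}\log p = \frac{1}{p}\frac{\partial p}{\partial \overline{z}^*}$ for positive real $p$ (this follows from the real chain rule applied componentwise), we recognize the first term as $\eta_{\mathrm{out}}$ from \eqref{equ: def_bo_out_denoiser}. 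The second term is $\overline{z}$ times the normalized integral, which equals $\overline{z}$. Therefore the score equals $(\eta_{\mathrm{out}}-\overline{z})/\overline{v}$, and rearranging gives \eqref{equ: tweedie_output}.

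The main obstacle is justifying the interchange of differentiation and integration. We need $p_{\mathsf{y}_n\mid\overline{\mathsf{z}}_n}(y\mid\overline{z})>0$, and we need a dominating function for $|z-\overline{z}|\,p(y\mid z)\,g$ that is uniform over $\overline{z}$ in a neighborhood. Since $p(y\mid\cdot)$ is integrable against the Gaussian kernel, and the Gaussian factor times a linear term is dominated locally uniformly by a slightly wider Gaussian times a constant, dominated convergence applies to each real partial derivative. If $p(y\mid\cdot)$ is only a density with respect to a general reference measure (for example, a discrete $y$), the same argument goes through verbatim, because only the $z$-integral is differentiated.
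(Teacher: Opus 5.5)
Your proposal is correct and follows the paper's route: the paper's proof is only the remark that the identity follows by differentiating the normalization factor \eqref{equ: p_y_z_ol} with respect to $\overline{z}^*$ (a complex extension of Rangan's Section~IV-B argument), and your Wirtinger computation $\partial g/\partial\overline{z}^* = \frac{z-\overline{z}}{\overline{v}}\,g$, followed by division by $p_{\mathsf{y}_n\mid\overline{\mathsf{z}}_n}(y\mid\overline{z})$, carries this out in more detail than the paper does. The one imprecise point is your domination step: integrability of $p_{\mathsf{y}_n\mid\mathsf{z}_n}(y\mid\cdot)$ against the variance-$\overline{v}$ kernel does not by itself give integrability against a wider Gaussian, so either invoke the polynomial-growth condition in the paper's footnote to Proposition~\ref{prop: se_dsm}, or use the fact that finiteness of $p_{\mathsf{y}_n\mid\overline{\mathsf{z}}_n}(y\mid\overline{z}')$ for all $\overline{z}'$ already gives integrability against the kernel times $e^{c|z-\overline{z}|}$ for every $c$, which suffices to dominate the locally uniform derivative bound.
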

This identity is a direct extension of the result in [\citenum{Rangan2012arxiv}, Section~IV-B] to complex-valued variables and can be verified by differentiating \eqref{equ: p_y_z_ol} with respect to $\overline{z}^*$.

Lemma~\ref{lem: tweedie_output} indicates that, although the underlying model in Fig.~\ref{fig: scalar_models}(a) is distinct from the standard \ac{AWGN} model, the \emph{Tweedie-type identity}~\cite{Robbins1956, Efron2011} holds for the corresponding \ac{MMSE} estimator $\eta_{\mathrm{out}}$.
Therefore, substituting the following alternative expression of $\hat{z}_n(t)$ in (A\ref{alg: gamp}-4) 
\begin{equation}
    \hat{z}_n(t) = \overline{z}_n(t) + \overline{v}_n^{\mathrm{z}}(t) \cdot
    \frac{\partial}{\partial \overline{z}_n^*}
    \log{
 p_{\mathsf{y}_n \mid \overline{\mathsf{z}}_n }
 (y \mid \overline{z}_n(t))},
\end{equation}
into (A\ref{alg: gamp}-7) yields
\begin{equation}
\check{z}_n(t) = 
\frac{\partial}{\partial \overline{z}_n^*}
\log{
 p_{\mathsf{y}_n \mid \overline{\mathsf{z}}_n }
 (y_n \mid \overline{z}_n(t))}.
 \label{equ: z_check_score}
 \end{equation}
Furthermore, by differentiating both sides of \eqref{equ: tweedie_output}
with respect to $\overline{z}$ and multiplying the result by $\overline{v}$, one obtains
 \begin{equation}
    \overline{v} \cdot 
    \eta_{\mathrm{out}}'(\overline{z}; \overline{v}) 
    =
    \overline{v} + (\overline{v})^2 \cdot 
    \frac{\partial^2}{
    \partial \overline{z}
    \partial \overline{z}^*}
\log{
p_{\mathsf{y}_n \mid \overline{\mathsf{z}}_n }
(y \mid \overline{z})}.
\label{equ: tweedie_output_deriv}
\end{equation}
 Thus, by equating \eqref{equ: tweedie_output_deriv} with (A\ref{alg: gamp}-6) after substituting $ \overline{z} = \overline{z}_n(t), \, \overline{v} = \overline{v}_n^{\mathrm{z}}(t), \, y = y_n $ into \eqref{equ: tweedie_output_deriv}, one arrives at
\begin{equation}
\hat{v}_n^{\mathrm{z}}(t)
=
 \overline{v}_n^{\mathrm{z}}(t) + \left( \overline{v}_n^{\mathrm{z}}(t) \right)^2 \cdot 
\frac{\partial^2}{
\partial \overline{z}
\partial \overline{z}^*}
\log{
p_{\mathsf{y}_n \mid \overline{\mathsf{z}}_n }
(y_n \mid \overline{z}_n(t))},
\nonumber
\end{equation}
 equivalently, we have
 \begin{equation}
\underbrace{
\frac{ 1 - \hat{v}_n^{\mathrm{z}}(t) / \overline{v}_n^{\mathrm{z}}(t) }{\overline{v}_n^{\mathrm{z}}(t)}}_{= 1/\check{v}_n^{\mathrm{z}}(t) \textrm{ from (A\ref{alg: gamp}-8)}}
\!
=
\!
- \frac{\partial^2}{
\partial \overline{z}
\partial \overline{z}^*}
\log{
p_{\mathsf{y}_n \mid \overline{\mathsf{z}}_n }
(y_n \mid \overline{z}_n(t))}.
\label{equ: v_z_check_score}
\end{equation}

\input{ALG/gamp_sc}

Therefore, using \eqref{equ: z_check_score} and \eqref{equ: v_z_check_score},
Algorithm~\ref{alg: gamp} can be equivalently represented as Algorithm~\ref{alg: gamp_sc}, where the input and output \emph{decision functions}, $g_{\mathrm{in}}$ and $g_{\mathrm{out}}$, are defined as
 \begin{subequations}
\begin{align}
&
g_{\mathrm{out}}
 \left( \overline{z}; \overline{v}, y \right)
 \triangleq 
\frac{\partial}{\partial \overline{z}^*}
\log{
p_{\mathsf{y}_n \mid \overline{\mathsf{z}}_n }
(y \mid \overline{z})},
\label{equ: def_g_out}
\\
&
g_{\mathrm{in}}
 \left( \overline{x}; \overline{v} \right)
 \triangleq 
\eta_{\mathrm{in}} (\overline{x}; \overline{v}).
\label{equ: def_g_in}
\end{align}
\end{subequations}

\subsection{Heuristic Approximation in the Large-System Limit}

Finally, assuming the asymptotic regime ($N, M \gg 1$), 
we further simplify Algorithm~\ref{alg: gamp_sc} into a form that is amenable to a direct \ac{SE}-based convergence analysis.

This simplification can be achieved by heuristically averaging out the variance parameters across components.
We begin by averaging $\{ \hat{v}_m^{\mathrm{x}}(t+1) \}_{m=1}^{M}$ in (A\ref{alg: gamp_sc}-10), 
namely, 
\begin{align}
    &
    \hat{v}^{\mathrm{x}} (t+1) \triangleq \frac{1}{M} \sum_{m=1}^{M} \hat{v}_m^{\mathrm{x}}(t+1)
    \nonumber
    \\
    & \quad =
    \frac{1}{M} \sum_{m=1}^{M} 
    \left[\overline{v}_m^{\mathrm{x}}{(t)}
    \cdot
    \frac{\partial}{\partial \overline{x}_m}
    g_{\mathrm{in}} \left(\overline{x}_m{(t)}; \overline{v}_m^{\mathrm{x}}{(t)} \right)\right].
    \label{equ: v_hat_x_ave}
\end{align}%
Then, replacing $\hat{v}_m^{\mathrm{x}} (t)$ with $\hat{v}^{\mathrm{x}}(t)$ in (A\ref{alg: gamp_sc}-3) yields
\begin{align}
&
\overline{v}_n^{\mathrm{z}}(t) 
\to
\sum_{m=1}^{M} |a_{n, m}|^2 \hat{v}^{\mathrm{x}}(t)
\nonumber
\\
&=
M
\hat{v}^{\mathrm{x}}(t)
\cdot 
\underbrace{
\frac{1}{M}
\sum_{m=1}^{M}
|a_{n, m}|^2
}_{\approx \mathbb{E}\left[|a_{n, m}|^2\right] = 1/N}
\approx
\frac{\hat{v}^{\mathrm{x}}(t)}{\xi} \triangleq \overline{v}^{\mathrm{z}}(t).
\label{equ: v_bar_z_ave}
\end{align}
Accordingly, we replace 
$ \overline{v}_n^{\mathrm{z}}(t)$ with $\overline{v}^{\mathrm{z}}(t) $ in (A\ref{alg: gamp_sc}-4, 5, 6).
Furthermore, $\{1/\check{v}_n^{\mathrm{z}}(t)\}_{n=1}^{N}$ in (A\ref{alg: gamp_sc}-6) are also averaged out, leading to
\begin{align}
    \frac{1}{\check{v}^{\mathrm{z}}(t)} 
    &\triangleq
    \frac{1}{N}
    \sum_{n=1}^{N} \frac{1}{\check{v}_n^{\mathrm{z}}(t)}
    \nonumber
    \\
    &
    =
    -
    \underbrace{
    \frac{1}{N}
    \sum_{n=1}^{N} 
    \frac{\partial}{\partial \overline{z}_n}
    g_{\mathrm{out}}
    \left( \overline{z}_n (t); \overline{v}^{\mathrm{z}}(t), y_n \right)}_{ \triangleq \alpha^{\mathrm{out}}(t) },
    \label{equ: v_check_z_approx}
\end{align}
where we define \emph{output divergence} $\alpha^{\mathrm{out}}(t)$, a quantity that represents the average sensitivity of $g_{\mathrm{out}}$ to its input value.
Then, (A\ref{alg: gamp_sc}-7) can be approximated
using the output divergence in \eqref{equ: v_check_z_approx} as
\begin{align}
\overline{v}_m^{\mathrm{x}}{(t)}
&\to
\!
\left( \sum_{n=1}^{N} \frac{|a_{n, m}|^2}{\check{v}^{\mathrm{z}}{(t)}} \right)^{-1}
\!\!\!=
\frac{\check{v}^{\mathrm{z}}(t)}{N}
\biggl(
\underbrace{
\frac{1}{N}\sum_{n=1}^{N} |a_{n, m}|^2
}_{\approx \mathbb{E}\left[ |a_{n, m}|^2 \right] = 1/N}
\biggr)^{-1}
\nonumber
\\
&
\approx
\check{v}^{\mathrm{z}}(t)
=
-\frac{1}{\alpha^{\mathrm{out}}(t)}.
\end{align}
Accordingly, $\overline{v}_m^{\mathrm{x}}(t)$ in (A\ref{alg: gamp_sc}-8, 9) and \eqref{equ: v_hat_x_ave} can be replaced with $-1/\alpha^{\mathrm{out}}(t)$.
Especially, 
\eqref{equ: v_hat_x_ave} reduces to
\begin{align}
&
\hat{v}^{\mathrm{x}}{(t+1)} 
\nonumber
\\
&
\to
-
\frac{1}{\alpha^{\mathrm{out}}(t)} 
\cdot
\underbrace{
\frac{1}{M}
\sum_{m=1}^{M}
\frac{\partial}{\partial \overline{x}_m}
g_{\mathrm{in}}
\left(\overline{x}_m{(t)}; -1/\alpha^{\mathrm{out}}(t) \right)
}_{\triangleq \alpha^{\mathrm{in}}(t+1) }
\nonumber
\\
&=
-
\frac{\alpha^{\mathrm{in}}(t+1)}{\alpha^{\mathrm{out}}(t)},
\label{equ: v_hat_x_approx}
\end{align}
where we define the \emph{input divergence} $ \alpha^{\mathrm{in}}(t) $ similarly to $\alpha^{\mathrm{out}}(t) $.
Finally, substituting \eqref{equ: v_hat_x_approx} into the definition of $ \overline{v}^{\mathrm{z}}(t) $ in \eqref{equ: v_bar_z_ave} yields
\begin{equation}
    \overline{v}^{\mathrm{z}}(t) = 
    \frac{\hat{v}^{\mathrm{x}}(t)}{\xi}
    =
    - \frac{1}{\xi} \cdot \frac{\alpha^{\mathrm{in}}(t)}{\alpha^{\mathrm{out}}(t-1)}.
    \label{equ: v_bar_div_empirical}
\end{equation}

\input{ALG/gamp_sv_div}

The resultant simplified update rules of Algorithm~\ref{alg: gamp_sc} are summarized in Algorithm~\ref{alg: gamp_sv_div}, where we define
\begin{subequations}
\begin{align}
&
\overline{\bm{z}}(t) \triangleq [\overline{z}_1(t), \overline{z}_2(t), \ldots , \overline{z}_N(t)]^{\mathsf{T}} \in \mathbb{C}^{N \times 1},
\\
&
\check{\bm{z}}(t) \triangleq [\check{z}_1(t), \check{z}_2(t), \ldots , \check{z}_N(t)]^{\mathsf{T}}
\in \mathbb{C}^{N \times 1},
\\
&
\overline{\bm{x}}(t) \triangleq [\overline{x}_1(t), \overline{x}_2(t), \ldots , \overline{x}_M(t)]^{\mathsf{T}} \in \mathbb{C}^{M \times 1},
\\
&
\hat{\bm{x}}(t) \triangleq [\hat{x}_1(t), \hat{x}_2(t), \ldots , \hat{x}_M(t)]^{\mathsf{T}}
\in \mathbb{C}^{M \times 1}.
\end{align}
\end{subequations}

Building on the reformulations above, we observe that the analytical tractability of Bayes-\ac{GAMP} critically hinges on whether the decision functions $g_{\mathrm{out}}$ in (A\ref{alg: gamp_sv_div}-4, 5) and $g_{\mathrm{in}}$ in (A\ref{alg: gamp_sv_div}-7, 8) can be derived analytically.
In particular, the output decision function $g_{\mathrm{out}}$ is given by the score function in \eqref{equ: def_g_out},
\textit{i.e.}, the logarithmic derivative of the likelihood
$p_{\mathsf{y}_n \mid \overline{\mathsf{z}}_n}(y_n \mid \overline{z})$,
which arises from the scalar Gaussian-prior model in
Fig.~\ref{fig: scalar_models}(a), rather than from the standard \ac{AWGN} model in Fig.~\ref{fig: scalar_models}(b).
Furthermore, this likelihood can be expressed as the convolution of the observation likelihood $p_{\mathsf{y}_n \mid \mathsf{z}_n}(y_n \mid z)$ and a noncentral Gaussian \ac{PDF} with respect to $z$, as shown in \eqref{equ: p_y_z_ol}.
Therefore, the analytical derivation of $g_{\mathrm{out}}$ requires precisely the two conditions stated in the Introduction as (C1) and (C2) in Section~\ref{chap: intro}:
the availability of an explicit expression for the observation likelihood associated with the nonlinear mapping $h$, and the closed-form evaluability of the corresponding convolution integral.
%
Satisfying both conditions is generally challenging, especially for complex-valued Bayes-\ac{GAMP}.
Although (C2) can, in principle, be addressed through numerical integration, its evaluation essentially reduces to a computationally demanding two-dimensional real-valued integral for each output component, rendering its direct use almost prohibitive within the iterative complex-valued \ac{GAMP} framework.

Motivated by this observation, the next section develops a principled methodology for training an \ac{NN} that emulates the behavior of $g_{\mathrm{out}}$ within the Bayes-\ac{GAMP} recursion without requiring explicit evaluation of the underlying score function.
%

%% file: ALG/gamp.tex


\begin{algorithm}[!t]
\caption{Bayes-GAMP (via Denoiser)~\cite{Rangan2011, Schniter2015}}
\label{alg: gamp}
\begin{algorithmic}[1]
\Require{$\bm{y} \in \mathbb{C}^{N \times 1}, \bm{A} \in \mathbb{C}^{N \times M}$.}
\Ensure{$ \{ \hat{x}_m{(T+1)}\}_{m=1}^{M}$.}


\State $\forall m : \hat{x}_m (1) = 0, \hat{v}_m^{\mathrm{x}} (1) = \sigma_{\mathrm{x}}^2, \, 
\forall n : \check{z}_n (0) = 0$
%
%
\For{$t=1 \,\, \textrm{to} \,\, T$}

\!\!\!\!\!\!\!\!\!\!\!\! /* Output Module */

\State $\forall n : \overline{v}^{\mathrm{z}}_n(t) = \sum_{m=1}^{M} |a_{n, m}|^2 \hat{v}_m^{\mathrm{x}}(t) $
\State $\forall n : \overline{z}_n(t) = \sum_{m=1}^{M} a_{n, m} \hat{x}_m(t) - \overline{v}^{\mathrm{z}}_n(t) \check{z}_n(t-1) $
\State $\forall n : \hat{z}_n(t) = \eta_{\mathrm{out}}
\left( \overline{z}_n (t); \overline{v}_n^{\mathrm{z}}(t), y_n \right)$
\State $\forall n : \hat{v}_n^{\mathrm{z}}(t) =
\overline{v}_n^{\mathrm{z}}(t)
\cdot
\eta_{\mathrm{out}}'
\left( \overline{z}_n (t); \overline{v}_n^{\mathrm{z}}(t), y_n \right)$
\State $\forall n : \check{z}_n(t) = \frac{\hat{z}_n(t) - \overline{z}_n(t)}{\overline{v}^{\mathrm{z}}_n(t)} 
$
\State $\forall n : \check{v}_n^{\mathrm{z}}(t) =
\frac{\overline{v}^{\mathrm{z}}_n(t)}{1 - \hat{v}_n^{\mathrm{z}}(t) / \overline{v}^{\mathrm{z}}_n(t)}$

\!\!\!\!\!\!\!\!\!\!\!\!  /* Input Module */

\State $ \forall m : \overline{v}_m^{\mathrm{x}}(t) = \left( \sum_{n=1}^{N} \frac{|a_{n, m}|^2}{\check{v}_n^{\mathrm{z}} (t)}   \right)^{-1} $
\State $ \forall m : \overline{x}_m (t) = \hat{x}_m (t) + \overline{v}_m^{\mathrm{x}}(t) \sum_{n=1}^{N} a_{n, m}^* \check{z}_n(t) $
\State $\forall m : \hat{x}_m(t+1) =  \eta_{\mathrm{in}} \left(\overline{x}_m{(t)}; \overline{v}_m^{\mathrm{x}}{(t)} \right)$
\State $\forall m : \hat{v}_m^{\mathrm{x}}{(t+1)} = 
\overline{v}_m^{\mathrm{x}}{(t)}
\cdot
\eta_{\mathrm{in}}' 
\left(\overline{x}_m{(t)}; \overline{v}_m^{\mathrm{x}}{(t)} \right)$
\EndFor
\end{algorithmic}
\end{algorithm}

%% file: ALG/gamp_sc.tex


\begin{algorithm}[!t]
\caption{Bayes-GAMP (via Score)~\cite{Rangan2011, Schniter2015}}
\label{alg: gamp_sc}
\begin{algorithmic}[1]
\Require{$\bm{y} \in \mathbb{C}^{N \times 1}, \bm{A} \in \mathbb{C}^{N \times M}$.}
\Ensure{$ \{ \hat{x}_m{(T+1)}\}_{m=1}^{M}$.}


\State $\forall m : \hat{x}_m (1) = 0, \hat{v}_m^{\mathrm{x}} (1) = \sigma_{\mathrm{x}}^2, \, 
\forall n : \check{z}_n (0) = 0$
%
%
\For{$t=1 \,\, \textrm{to} \,\, T$}

\!\!\!\!\!\!\!\!\!\!\!\! /* Output Module */

\State $\forall n : \overline{v}^{\mathrm{z}}_n(t) = \sum_{m=1}^{M} |a_{n, m}|^2 \hat{v}_m^{\mathrm{x}}(t) $
\State $\forall n : \overline{z}_n(t) = \sum_{m=1}^{M} a_{n, m} \hat{x}_m(t) - \overline{v}^{\mathrm{z}}_n(t) \check{z}_n(t-1) $
\State $\forall n : \check{z}_n(t) = g_{\mathrm{out}}
\left( \overline{z}_n (t); \overline{v}_n^{\mathrm{z}}(t), y_n \right)$
\State $\forall n : \frac{1}{\check{v}_n^{\mathrm{z}}(t)} =
-
g_{\mathrm{out}}'
\left( \overline{z}_n (t); \overline{v}_n^{\mathrm{z}}(t), y_n \right)$

\!\!\!\!\!\!\!\!\!\!\!\!  /* Input Module */

\State $ \forall m : \overline{v}_m^{\mathrm{x}}(t) = \left( \sum_{n=1}^{N} \frac{|a_{n, m}|^2}{\check{v}_n^{\mathrm{z}} (t)}   \right)^{-1} $
\State $ \forall m : \overline{x}_m (t) = \hat{x}_m (t) + \overline{v}_m^{\mathrm{x}}(t) \sum_{n=1}^{N} a_{n, m}^* \check{z}_n(t) $
\State $\forall m : \hat{x}_m(t+1) =  g_{\mathrm{in}} \left(\overline{x}_m{(t)}; \overline{v}_m^{\mathrm{x}}{(t)} \right)$
\State $\forall m : \hat{v}_m^{\mathrm{x}}{(t+1)} = 
\overline{v}_m^{\mathrm{x}}{(t)}
\cdot
g_{\mathrm{in}}' \left(\overline{x}_m{(t)}; \overline{v}_m^{\mathrm{x}}{(t)} \right)$
\EndFor
\end{algorithmic}
\end{algorithm}

%% file: ALG/gamp_sv_div.tex


\begin{algorithm}[!t]
\caption{Bayes-GAMP (via Divergence)}
\label{alg: gamp_sv_div}
\begin{algorithmic}[1]
\Require{$\bm{y} \in \mathbb{C}^{N \times 1}, \bm{A} \in \mathbb{C}^{N \times M}$.}
\Ensure{$ \bm{\hat{x}}{(T+1)}$.}


\State $\hat{\bm{x}}(1) = \bm{0}, \, \check{\bm{z}}(0) = \bm{0}, \, - \frac{\alpha^{\mathrm{in}}(1)}{\alpha^{\mathrm{out}}(0)} = \sigma_{\mathrm{x}}^2.$
%
%
\For{$t=1 \,\, \textrm{to} \,\, T$}

\!\!\!\!\!\!\!\!\!\!\!\! /* Output Module */

\State $\overline{\bm{z}}(t)
=
\bm{A} \hat{\bm{x}}(t)
+
\frac{1}{\xi}
\cdot
\frac{\alpha^{\mathrm{in}}(t)}{\alpha^{\mathrm{out}}(t-1)}
\check{\bm{z}}(t-1) $
\State $\check{\bm{z}}(t)
=
g_{\mathrm{out}}
\left(
\overline{\bm{z}}(t);
-
\frac{1}{\xi}
\cdot
\frac{\alpha^{\mathrm{in}}(t)}{\alpha^{\mathrm{out}}(t-1)},
\bm{y}
 \right) $
\State $\alpha^{\mathrm{out}} (t)
=
\left\langle
g_{\mathrm{out}}'
\left(
\overline{\bm{z}}(t);
-
\frac{1}{\xi}
\cdot
\frac{\alpha^{\mathrm{in}}(t)}{\alpha^{\mathrm{out}}(t-1)},
\bm{y}
 \right)
\right\rangle$

\!\!\!\!\!\!\!\!\!\!\!\!  /* Input Module */

\State $\overline{\bm{x}}(t)
=
\hat{\bm{x}}(t)
-
\frac{1}{\alpha^{\mathrm{out}}(t)}
\bm{A}^{\mathsf{H}}
\check{\bm{z}}(t)$
\State $ \hat{\bm{x}}(t+1)
=
g_{\mathrm{in}}
\left(
\overline{\bm{x}}(t);
-\frac{1}{\alpha^{\mathrm{out}}(t)}
 \right) $
\State $\alpha^{\mathrm{in}} (t+1)
=
\left\langle
g_{\mathrm{in}}'
\left(
\overline{\bm{x}}(t);
-
\frac{1}{\alpha^{\mathrm{out}}(t)}
 \right)
\right\rangle$
\EndFor
\end{algorithmic}
\end{algorithm}

%% file: TXT/se_dsm.tex
\subsection{Motivation and Challenges}

In this section, we consider training a network
$s_{\varTheta}(\overline{z}, y; \overline{v})$
with learnable parameters $\varTheta$ such that 
$s_{\varTheta}(\overline{z}, y; \overline{v}) 
\approx g_{\mathrm{out}}(\overline{z}; \overline{v}, y)$,
where $g_{\mathrm{out}}$ denotes the output decision function defined in \eqref{equ: def_g_out}.
A straightforward approach would be to train
$s_\varTheta : (\mathbb{C} \times \mathbb{C} \times \mathbb{R}_+) \to \mathbb{C}$
by minimizing the following \ac{MSE}:
\begin{align}
    J(\varTheta; \overline{v})
    &\triangleq
    \mathbb{E}_{\overline{\mathsf{z}}, \mathsf{y}}
    \left[ 
    \left| 
    s_{\varTheta} (\overline{z}, y; \overline{v}) - g_{\mathrm{out}} (\overline{z}; \overline{v}, y)
    \right|^2
    \right]
    \nonumber
    \\
    &
    \!\!\!\!\!\!\!\!\!\!\!\!\!\!
    \!\!\!\!\!\!\!
    =
    \mathbb{E}_{\overline{\mathsf{z}}, \mathsf{y}}
    \left[ 
    \left| 
    s_{\varTheta} (\overline{z}, y; \overline{v}) - \frac{\partial}{\partial \overline{z}^*}
\log{
p_{\mathsf{y}_n \mid \overline{\mathsf{z}}_n }
(y \mid \overline{z})}
    \right|^2
    \right].
\label{equ: naive_cost_func}
\end{align}
However, implementing this training procedure in practice involves two major challenges:
\begin{itemize}
\item How can training samples $(\overline{z}, y)$ be generated for evaluating \eqref{equ: naive_cost_func}?

\item How can the loss in \eqref{equ: naive_cost_func} be evaluated when the true score function $g_{\mathrm{out}}$ is not available in closed form?
\end{itemize}
To elaborate on the first issue, during the execution phase of Algorithm~\ref{alg: gamp_sv_div}, $\overline{z} = \overline{z}_n$ is a deterministic quantity given by  (A\ref{alg: gamp_sv_div}-3). 
Consequently, the mean of the Gaussian prior in Fig.~\ref{fig: scalar_models}(a) is fixed, and samples $(z,y)\sim(z_n,y_n)$ conditioned on $\overline{z}=\overline{z}_n$ can be easily generated according to the Gaussian-prior model in Fig.~\ref{fig: scalar_models}(a).
In contrast, training based on \eqref{equ: naive_cost_func}
requires evaluating the expectation with respect to the empirical
distribution of $(\overline{z}, y)$.
Hence, one must jointly generate sample pairs
$(\overline{z},y)$ \emph{from scratch} according to this empirical distribution.
However, characterizing this distribution is difficult when
$g_{\mathrm{out}}$ is unavailable in closed form.
%
The second issue is also nontrivial.
Even if suitable training samples $(\overline{z},y)$ could be generated,
direct minimization of \eqref{equ: naive_cost_func} remains intractable
without a closed-form expression for $g_{\mathrm{out}}$.

\subsection{Main Claim}

The following proposition provides a principled means of circumventing the aforementioned challenges.

\begin{proposition}[SE-DSM]
\label{prop: se_dsm}
Let $\left(\overline{Z}_t, Z, Y\right)$ be random variables whose joint \ac{PDF}\footnote{
For simplicity, this paper assumes that $Y$ is continuous-valued and that
$ p_{\mathsf{Y} \mid \mathsf{Z}} (y\,|\,z) $
is a density with respect to the two-dimensional Lebesgue measure on $\mathbb{C}$.
However, the discussion remains valid for discrete-valued and mix cases by replacing the Lebesgue measure with an appropriate reference measure $\mu$ (\textit{e.g.}, the counting measure), interpreting
$ p_{\mathsf{Y} \mid \mathsf{Z}} (y \,|\, z) $
as the corresponding Radon-Nikodym derivative with respect to $\mu$, and replacing
$\mathrm{d}y$
with
$\mathrm{d}\mu(y)$
in all integrals. 
See Section~2 of \cite{Feng2022} for details.
} is given by
\begin{align}
&
p_{\overline{\mathsf{Z}}_t, \mathsf{Z}, \mathsf{Y}} (\overline{z}, z, y)
=
p_{ \overline{\mathsf{Z}}_t } (\overline{z})
\cdot
p_{ \mathsf{Z} \mid \overline{\mathsf{Z}}_t }
( z \mid \overline{z} )
\cdot
p_{ \mathsf{Y} \mid \mathsf{Z} }
( y \mid z )
\nonumber
\\
&
\!\!\!\!
=
\frac{e^{ - \frac{ |\overline{z}|^2 }{  \xi^{-1} \sigma_{\mathrm{x}}^2 - \overline{v} } }}{\pi \! \left( \xi^{-1} \sigma_{\mathrm{x}}^2 - \overline{v} \right)}
\cdot
\frac{e^{- \frac{| z - \overline{z} |^2}{\overline{v}} }}{\pi \overline{v}}
\cdot
p_{ \mathsf{Y} \mid \mathsf{Z} }
( y \mid z ).
\label{equ: se_joint_pdf}
\end{align}
Using these random variables, define the loss function
\begin{equation}
    J_{\mathrm{DSM}}(\varTheta; \overline{v})
    \triangleq
    \mathbb{E}_{ \overline{\mathsf{Z}}_t, \mathsf{Z}, \mathsf{Y} }
    \!
    \left[ 
    \left| 
    s_{\varTheta} \left(\overline{Z}_t, Y; \overline{v}\right)
    +
    \frac{ \overline{Z}_t - Z }{\overline{v}}
    \right|^2
    \right],
    \label{equ: dsm}
\end{equation}
and its minimizer
\begin{equation}
    \varTheta^\star \triangleq \underset{\varTheta}{\mathrm{arg\,min}} \, J_{\mathrm{DSM}} (\varTheta; \overline{v}).
    \label{equ: def_theta_star}
\end{equation}
Assume that both output decision function
$g_{\mathrm{out}}\left(\overline{z}; \overline{v}, h(z, w) \right)$
and the network $s_{\varTheta} (\overline{z}, h(z, w); \overline{v})$
are Lipschitz continuous with respect to $(\overline{z},z,w)$ and that the assumptions stated in Section~\ref{chap: signal_model_assumption} hold\footnote{
The Bayes-optimal input denoiser $g_{\mathrm{in}} (\overline{x}; \overline{v}) = \eta_{\mathrm{in}} (\overline{x}; \overline{v})$ is Lipschitz continuous; see \cite[Lemma~2]{Takeuchi2020}.
}.
Also,  $s_{\varTheta}$ is assumed to be sufficiently expressive to contain the true score function in \eqref{equ: def_g_out}.
Then, under certain regularity conditions\footnote{
Specifically,
$\mathbb{E}_{\overline{\mathsf{Z}}_t, \mathsf{Y}}
\!
\left[ 
\left| 
s_{\varTheta} (\overline{Z}_t, Y; \overline{v}) 
\right|^2
\right], \,
\mathbb{E}_{\overline{\mathsf{Z}}_t, \mathsf{Y}}
\!
\left[ 
\left| 
g_{\mathrm{out}} (\overline{Z}_t; \overline{v}, Y)
\right|^2
\right] < \infty
$
and $p_{\mathsf{Y} \mid \mathsf{Z}} (y \,|\, z)$ admits interchange of differentiation and integration with respect to $z$. 
A sufficient condition is that $p_{\mathsf{Y} \mid \mathsf{Z}} (y\,|\, z)$ grows at most polynomially in $z$.
}, 
Algorithm~\ref{alg: gamp_sv_div} and the modified \ac{GAMP} algorithm obtained by replacing
$g_{\mathrm{out}}(\overline{z};\overline{v},y)$
with
$s_{\varTheta^\star}(\overline{z},y;\overline{v})$
in steps (A\ref{alg: gamp_sv_div}-4) and (A\ref{alg: gamp_sv_div}-5) achieve, almost surely in the large-system limit, the same asymptotic \ac{MSE}, 
\begin{equation}
\label{equ: def_mse}
 \mathrm{MSE}(t) \triangleq 
 \lim_{N = \xi M \to \infty}
 \frac{1}{M} \left\| \hat{\bm{x}}(t+1) - \bm{x} \right\|^2,
\end{equation}
at each iterative step
$ t = 1, 2, \ldots , T \, (< \infty)$.
\end{proposition}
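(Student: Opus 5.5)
The argument has two parts. The first is a denoising-score-matching identity, which shows that $s_{\varTheta^\star}$ coincides with $g_{\mathrm{out}}$ almost everywhere under the SE-induced distribution \eqref{equ: se_joint_pdf}. The second is a state-evolution argument, which shows that this distribution is exactly the asymptotic joint law of $(\overline{z}_n(t), y_n)$ in Algorithm~\ref{alg: gamp_sv_div}. Together these give equality of the two GAMP trajectories in the PL(2) sense, and hence of $\mathrm{MSE}(t)$.

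\textbf{Step 1 (DSM identity).} Expand $J_{\mathrm{DSM}}$ and compare it with $J$ in \eqref{equ: naive_cost_func}, where $J$ is taken with respect to the marginal $p_{\overline{\mathsf{Z}}_t,\mathsf{Y}}$. The squared terms in $s_\varTheta$ agree, so the task is to show
\begin{equation}
\mathbb{E}\!\left[ s_\varTheta(\overline{Z}_t,Y;\overline{v})^{*}\, \frac{Z-\overline{Z}_t}{\overline{v}} \right] = \mathbb{E}\!\left[ s_\varTheta(\overline{Z}_t,Y;\overline{v})^{*}\, g_{\mathrm{out}}(\overline{Z}_t;\overline{v},Y) \right].
\end{equation}
To prove this, condition on $(\overline{Z}_t,Y)$. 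By the factorization \eqref{equ: se_joint_pdf}, given $\overline{Z}_t=\overline{z}$ the pair $(Z,Y)$ follows precisely the Gaussian-prior model of Fig.~\ref{fig: scalar_models}(a). Therefore
\begin{equation}
\mathbb{E}[Z\mid \overline{Z}_t=\overline{z},Y=y]=\eta_{\mathrm{out}}(\overline{z};\overline{v},y).
\end{equation}
Lemma~\ref{lem: tweedie_output} then gives $\mathbb{E}[(Z-\overline{Z}_t)/\overline{v}\mid\overline{Z}_t,Y]=g_{\mathrm{out}}$. The differentiation under the integral needed here is covered by the stated regularity conditions, and the finite second moments justify the tower property. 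This yields
\begin{equation}
J_{\mathrm{DSM}}(\varTheta;\overline{v})=J(\varTheta;\overline{v})+C(\overline{v}),
\end{equation}
where $C(\overline{v})$ does not depend on $\varTheta$. By the expressivity assumption, $\min J=0$ is attained, so $s_{\varTheta^\star}=g_{\mathrm{out}}$ holds $p_{\overline{\mathsf{Z}}_t,\mathsf{Y}}$-a.e. Since both functions are continuous, writing $y=h(z,w)$ upgrades this to equality on the support, as functions of $(\overline{z},z,w)$.

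\textbf{Step 2 (SE characterization).} Invoke the complex-valued GAMP state evolution \cite{Rangan2012arxiv,Javanmard2013,Feng2022,Takeuchi2020}. It applies to Algorithm~\ref{alg: gamp_sv_div} with Lipschitz decision functions under the assumptions of Section~\ref{chap: signal_model_assumption}, and gives
\begin{equation}
\{\overline{z}_n(t), z_n, w_n\}_{n=1}^{N}\overset{\mathrm{PL}(2)}{\longrightarrow}(\overline{Z}_t,Z,W).
\end{equation}
Here $(\overline{Z}_t,Z)$ is jointly complex Gaussian, independent of $W$, with $\mathbb{E}|Z|^2=\sigma_{\mathrm{x}}^2/\xi$. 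The divergences converge almost surely to their SE counterparts. For Bayes-GAMP, the Bayes-optimal (Nishimori-type) structure gives two further facts. First, the empirical $\hat v^{\mathrm x}$ equals the SE MSE. Second, $\overline{Z}_t$ is uncorrelated with $Z-\overline{Z}_t$, and $\mathrm{Var}(Z-\overline{Z}_t)=\overline{v}$ equals the SE value of $-\xi^{-1}\alpha^{\mathrm{in}}(t)/\alpha^{\mathrm{out}}(t-1)$. Consequently $\overline{Z}_t\sim\mathcal{CN}(0,\xi^{-1}\sigma_{\mathrm x}^2-\overline{v})$ and $Z\mid\overline{Z}_t\sim\mathcal{CN}(\overline{Z}_t,\overline{v})$, which is exactly \eqref{equ: se_joint_pdf} once $Y=h(Z,W)$ is included. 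Verifying this matched-variance structure is the main obstacle. I would prove it by induction on $t$, using the MMSE orthogonality principle for $\eta_{\mathrm{in}}$, and for $\eta_{\mathrm{out}}$ via Lemma~\ref{lem: tweedie_output} combined with the Stein-type identity $\mathbb{E}[g_{\mathrm{out}}']=-\mathbb{E}|g_{\mathrm{out}}|^2$ under the matched model.

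\textbf{Step 3 (equality of trajectories).} Run the SE recursion for the modified algorithm with $s_{\varTheta^\star}$. Its state parameters depend on $s_{\varTheta^\star}$ only through expectations such as $\mathbb{E}|s|^2$, $\mathbb{E}[s'\,]$ and $\mathbb{E}[s\,Z^*]$, all taken under \eqref{equ: se_joint_pdf}. The expectation $\mathbb{E}[s'\,]$ can be rewritten via Stein's lemma as $\mathbb{E}[s\,(\overline{Z}_t-\ldots)^*]$, so that a.e. equality suffices and no derivative of $s$ needs to match.

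Argue by induction on $t$. If the SE parameters up to time $t$ coincide, the SE law of $(\overline{Z}_t,Y)$ is the same for both algorithms. Step 1 is applied with the corresponding $\overline{v}$, so every quantity entering the update coincides, and hence so do the parameters at $t+1$. Finally, $\mathrm{MSE}(t)$ is an SE functional of the PL(2) limit of $(\hat{\bm x}(t+1),\bm x)$, so it agrees almost surely for both algorithms at every $t\le T$.
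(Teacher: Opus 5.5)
Your proposal is correct and shares the paper's three-part skeleton:
\begin{enumerate}
\item the SE-induced law \eqref{equ: se_joint_pdf};
\item equivalence of $J_{\mathrm{DSM}}$ with the explicit score-matching loss up to a $\varTheta$-independent constant;
\item a.e.\ equality $s_{\varTheta^\star}=g_{\mathrm{out}}$, giving the same SE recursion and hence the same $\mathrm{MSE}(t)$.
\end{enumerate}

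\textbf{Key identity.} Here the routes differ. The paper's Lemma~\ref{lem: (b)_dsm} proceeds Vincent-style. It writes $p_{\mathsf{Y}\mid\overline{\mathsf{Z}}_t}$ as a convolution, moves $\partial/\partial\overline{z}^*$ inside the integral, and uses $\partial_{\overline{z}^*}\log p_{\mathsf{Z}\mid\overline{\mathsf{Z}}_t}=(z-\overline{z})/\overline{v}$. You instead note that $Z$ given $(\overline{Z}_t,Y)$ is exactly the Gaussian-prior posterior. Hence $\mathbb{E}[(Z-\overline{Z}_t)/\overline{v}\mid\overline{Z}_t,Y]=g_{\mathrm{out}}$ by Lemma~\ref{lem: tweedie_output}, and the tower property finishes the argument. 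Underneath, the computation is the same, since Lemma~\ref{lem: tweedie_output} is itself proved by differentiating the convolution. Your version identifies the minimizer of $J_{\mathrm{DSM}}$ over all square-integrable functions directly as a conditional mean. The paper's version makes the parallel with classical DSM explicit.

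\textbf{SE law.} The paper starts from the already Bayes-simplified recursion of Algorithm~\ref{alg: gamp_se}. There the matched output side ($H_t$ with variance $-1/\overline{\alpha}^{\mathrm{out}}_t$) is built in. The paper then needs only input-side identities (Lemma~2 of \cite{Takeuchi2020} and orthogonality) to express the covariance through $\overline{v}$. Your induction with the Fisher-information identity $\mathbb{E}[g_{\mathrm{out}}']=-\mathbb{E}|g_{\mathrm{out}}|^2$ would instead derive that built-in structure from a general GAMP SE. This is more self-contained but more than the paper requires.

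\textbf{Final step.} The paper merely asserts that a.e.\ equality yields the same recursion. You add an explicit induction over $t$ and observe that $\overline{\alpha}^{\mathrm{out}}_t$ involves a derivative yet is still determined by the values of $s$. This fills a step the paper leaves implicit. The observation can be justified by Stein's lemma, or directly from your continuity upgrade to pointwise equality on $\mathbb{C}^2\times\mathrm{supp}(W)$.

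\textbf{Caveats.} First, \cite{Rangan2012arxiv,Javanmard2013,Feng2022} are real-valued and \cite{Takeuchi2020} treats EP/VAMP. Complex-valued GAMP SE must therefore be stated as an assumption, as the paper does, not invoked as established. Second, at $t=1$ one has $\overline{Z}_1\equiv0$ and $\overline{v}=\xi^{-1}\sigma_{\mathrm{x}}^2$, so the law is degenerate. Stein's lemma is then unavailable, and $J_{\mathrm{DSM}}$ pins down only $s(0,\cdot)$, not the derivative entering $\alpha^{\mathrm{out}}(1)$. This base case needs separate treatment, although the paper's argument skips it as well.
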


Proposition~\ref{prop: se_dsm} implies that, if sufficiently rich $s_{\varTheta}$ is trained according to Algorithm~\ref{alg: training_nn}
and converges to a globally optimal solution, then replacing
$g_{\mathrm{out}}$
in Algorithm~\ref{alg: gamp_sv_div} with
$s_{\varTheta}$
does not alter the asymptotically achievable performance almost surely.
The key observation here is that training samples can be generated by drawing
$\overline{z}$ in \eqref{equ: naive_cost_func} from the Gaussian distribution specified in (A\ref{alg: training_nn}-2), instead of the empirical distribution appearing in \eqref{equ: naive_cost_func}.
Furthermore, the network is trained to predict the artificially injected Gaussian noise
$\bm{b}'$ introduced in (A\ref{alg: training_nn}-4), not the intractable score function.
We refer to this training methodology as \emph{SE-based DSM (SE-DSM)}.
At first glance, this result may seem a straightforward adaptation of the classical \ac{DSM} result~\cite{Vincent2011} to the \ac{GAMP} setting.
However, \ac{DSM} typically assumes the \ac{AWGN} model in Fig.~\ref{fig: scalar_models}(b), while SE-DSM is built upon the Gaussian-prior model in Fig.~\ref{fig: scalar_models}(a); 
hence, this proposition is far from trivial.

A particularly noteworthy aspect of Algorithm~\ref{alg: training_nn} is that it can be implemented as long as the forward evaluation of the nonlinear mapping $h$ is available.
In particular, the training procedure requires none of an explicit analytical expression of $h$, the derivation of the corresponding likelihood function
$p_{\mathsf{y}_n \mid \mathsf{z}_n} (y \mid z)$,
or knowledge of the output function (\textit{i.e.}, the resultant score function)
$g_{\mathrm{out}} (\overline{z}; \overline{v}, y)$.
Thus, Proposition~\ref{prop: se_dsm} suggests a remarkable possibility: 
for an extremely broad class of nonlinear mappings $h$, possibly including black-box ones, Bayes-\ac{GAMP} can be realized solely through a network
$s_{\varTheta}$ with sufficient representational capacity.
Furthermore, 
the fixed point of \ac{SE} equations of Algorithm~\ref{alg: gamp_sv_div} has been rigorously proven to coincide with the Bayes-optimal one,  provided that it is unique~\cite{Barbier2019, Takeuchi2025}.
Combining this fact with Proposition~\ref{prop: se_dsm} further suggests the asymptotic Bayes-optimality of Algorithm~\ref{alg: gamp_sv_div} with the globally optimal network $s_{\varTheta^\star}$ under suitable technical conditions.

We briefly comment on Proposition~\ref{prop: se_dsm}.
The rigor of its proof ultimately relies on the validity of the \ac{SE} characterization of \ac{GAMP}.
Several works have established rigorous \ac{SE} analyses for \ac{GAMP}, including \cite{Rangan2012arxiv, Javanmard2013, Feng2022};
however, these results are restricted to the real-valued setting.
\Ac{GAMP} itself was explicitly extended to the complex domain in \cite{Schniter2015, Zou2018}, and rigorous \ac{SE}-based analyses in the complex domain have subsequently appeared in \cite{Ma2019TIT, Mondelli2022}.
Nevertheless, these works do not constitute a direct complex-domain extension of the rigorous results in \cite{Rangan2012arxiv, Javanmard2013, Feng2022}.
Accordingly, the claims made in this paper should be understood as holding under the assumption that the \ac{SE} characterization of \ac{GAMP} remains valid in the complex domain.
That said, rigorous complex-domain \ac{SE} analyses have been established for Bayes-\ac{OAMP}/\ac{VAMP}~\cite{Ma2017, Rangan2019}; see, \textit{e.g.}, \cite{Takeuchi2020}.
In light of these results, it is natural to conjecture that an analogous rigorous \ac{SE} characterization also holds for \ac{GAMP} in the complex domain.


\input{ALG/train_nn}

%% file: ALG/train_nn.tex


\begin{algorithm}[!t]
\caption{Training Procedure based on SE-DSM}
\label{alg: training_nn}
\begin{algorithmic}[1]
\Require{$ \overline{v} \in \left( 0, \xi^{-1}  \sigma_{\mathrm{x}}^2 \right)$,  Batch Size $N_{\mathrm{B}}$, Learning Rate $\eta$.}


\Repeat

\State Sample $\overline{\bm{z}}' \sim \mathcal{CN} \left( \bm{0}, \left( \xi^{-1}\sigma_{\mathrm{x}}^2 - \overline{v} \right) \bm{I}_{N_{\mathrm{B}}} \right)$.

\State Sample $ \bm{b}' \sim \mathcal{CN} \left( \bm{0}, \overline{v}  \,\bm{I}_{N_{\mathrm{B}}} \right) $  independently of $\overline{\bm{z}}'$.

\State Calculate $ \bm{z}' = \overline{\bm{z}}' +  \bm{b}'$.

\State Sample $ \{w_i'\}_{i=1}^{{N_{\mathrm{B}}}} \overset{\mathrm{i.i.d.}}{\sim} W$ independently of $(\overline{\bm{z}}', \bm{b}')$ and define $ \bm{w}' \triangleq [w_1', w_2', \ldots , w_{N_{\mathrm{B}}}']^{\mathsf{T}} \in \mathbb{C}^{{N_{\mathrm{B}}} \times 1}$.

\State Calculate $ \bm{y}' = h(\bm{z}', \bm{w}')$.

\State Update $ \varTheta \leftarrow \varTheta - \eta \cdot \bm{\nabla}_{\varTheta} \left\| s_{\varTheta} (\overline{\bm{z}}', \bm{y}'; \overline{v}) + \frac{\overline{\bm{z}}' - \bm{z}'}{\overline{v}} \right\|^2 $

\Until{converged}
\end{algorithmic}
\end{algorithm}

%% file: TXT/proof.tex
\input{ALG/gamp_se}

In this section, we sketch the proof of Proposition~\ref{prop: se_dsm} under the assumption that the \ac{SE} equations of \ac{GAMP} remain valid in the complex domain.
The proof proceeds in three steps:
\begin{enumerate}
    \item[(i)] Based on the \ac{SE} characterization, we derive the joint distribution in \eqref{equ: se_joint_pdf}, which is used to generate the training data.
    \item[(ii)] We establish the equivalence, up to an additive constant independent of $\varTheta$, between the explicit score-matching objective induced by this distribution and the tractable loss function $J_{\mathrm{DSM}}$ in \eqref{equ: dsm}.
    \item[(iii)] We verify that replacing $g_{\mathrm{out}}$ with $s_{\varTheta^\star}$ does not alter the asymptotically achievable \ac{MSE}.
\end{enumerate}
The following subsections detail these steps.

\subsection{Derivation of \eqref{equ: se_joint_pdf} based on SE Characterization}

According to the \ac{SE} analysis of Bayes-GAMP~\cite{Rangan2012arxiv, Javanmard2013, Feng2022}, under the assumptions of Proposition~\ref{prop: se_dsm}, the signals in \eqref{equ: glm} and the messages propagating in Algorithm~\ref{alg: gamp_sv_div} empirically converge, in the large-system limit $N=\xi M \to \infty$, to a certain collection of random variables:
\begin{subequations}
\begin{align}
&
\label{equ: se_conv_out}
\bigl\{z_n, \overline{z}_n{(t)}, w_n, y_n \bigr\}_{n=1}^{N} \overset{\mathrm{PL}(\ell)}{\longrightarrow}
\left(Z, \overline{Z}_t, W, Y \right),
\\
&
\label{equ: se_conv_in}
\bigl\{x_m, \overline{x}_m{(t)} \bigr\}_{m = 1}^{M} \overset{\mathrm{PL}(\ell)}{\longrightarrow}
\left( X, \overline{X}_t \right),
\end{align}
\end{subequations}
where the random variables
$X, \big\{\overline{X}_t\bigr\}_{t=1}^{T}, Z, $ and  $\bigl\{\overline{Z}_t\bigr\}_{t=1}^{T}$
are characterized by Algorithm~\ref{alg: gamp_se},
with Gaussian random variables $\{H_t\}_{t=1}^{T}$ in (A\ref{alg: gamp_se}-8) independent across $t = 1, 2, \ldots , T$ and of all other random variables.
Moreover, the divergences in (A\ref{alg: gamp_sv_div}-5, 8) converge, almost surely in the large-system limit $N=\xi M \to \infty$, to deterministic quantities recursively defined by (A\ref{alg: gamp_se}-7, 9)~[\citenum{Rangan2012arxiv}, Claim~1], [\citenum{Feng2022}, Remark~4.3]. 
Specifically, for $t = 1, 2, \ldots, T$, 
\begin{equation}
    \alpha^{\mathrm{out}} (t) \overset{\mathrm{a.s.}}{\to} \overline{\alpha}_t^{\mathrm{out}}, \quad \alpha^{\mathrm{in}} (t+1) \overset{\mathrm{a.s.}}{\to} \overline{\alpha}_{t+1}^{\mathrm{in}}
    \label{equ: div_conv}
\end{equation}
holds.
Furthermore, the random variable $W$ in \eqref{equ: se_conv_out} and (A\ref{alg: gamp_se}-7) is also independent of all other random variables.
Finally, the random variable $Y$ in \eqref{equ: se_conv_out} is defined by $Y = h(Z,W)$.

Based on the above existing results, for all $t = 2, 3, \ldots , T$, 
the predicted variance $\overline{v} = \overline{v}^{\mathrm{z}} (t)$, which is fed with the output decision function $g_{\mathrm{out}} (\overline{z}; \overline{v}, y)$, converges almost surely to the following value in the large-system limit:
\begin{align}
    &
    \overline{v} 
    = \overline{v}^{\mathrm{z}} (t)
    \overset{\textrm{(a)}}{=}
    - \frac{1}{\xi} \cdot \frac{\alpha^{\mathrm{in}}(t)}{\alpha^{\mathrm{out}}(t-1)}
    \nonumber
    \\
    &
    \overset{\textrm{(b)}}{\overset{\mathrm{a.s.}}{\to}}
    -\frac{1}{\xi} \cdot \frac{\overline{\alpha}_t^{\mathrm{in}}}{\overline{\alpha}_{t-1}^{\mathrm{out}}}
    \overset{\textrm{(c)}}{=}
    \frac{1}{\xi}
    \cdot
    \!
    \left( \! - \frac{1}{\overline{\alpha}_{t-1}^{\mathrm{out}}} \! \right) \!
    \cdot
    \mathbb{E}_{\overline{\mathsf{X}}_t}
    \left[
    g_{\mathrm{in}}'
    \!
    \left(
    \!
    \overline{X}_t;
    -
    \frac{1}{\overline{\alpha}^{\mathrm{out}}_{t-1}}
    \!
    \right)
    \!
    \right]
    \nonumber
    \\
    &
    \overset{\textrm{(d)}}{=}
    \frac{1}{\xi}
    \cdot
    \mathbb{E}_{\mathsf{X}, \overline{\mathsf{X}}_{t-1}}
    \left[ 
    \left| X - g_{\mathrm{in}} \left( \overline{X}_{t-1}; - \frac{1}{\overline{\alpha}_{t-1}^{\mathrm{out}}} \right) \right|^2
    \right]
    \nonumber
    \\
    &=
    \frac{1}{\xi}
    \cdot
    \Biggl(
    \underbrace{
    \mathbb{E}_{\mathsf{X}}
    \left[ |X|^2 \right]}_{= \sigma_{\mathrm{x}}^2}
    -
    \,
    \mathbb{E}_{\mathsf{X}, \overline{\mathsf{X}}_{t-1}}
    \left[ X^* g_{\mathrm{in}}
    \left( \overline{X}_{t-1}; - \frac{1}{\overline{\alpha}_{t-1}^{\mathrm{out}}} \right) \right]
    \nonumber
    \\
    &
    \qquad \qquad
    -
    \,
    \mathbb{E}_{\mathsf{X}, \overline{\mathsf{X}}_{t-1}}
    \left[ X \, g_{\mathrm{in}}
    \left( \overline{X}_{t-1}; - \frac{1}{\overline{\alpha}_{t-1}^{\mathrm{out}}} \right)^* \right]
    \nonumber
    \\
    &
    \qquad \qquad \qquad \qquad
    +
    \,
    \mathbb{E}_{\overline{\mathsf{X}}_{t-1}}
    \left[
    \left|
    g_{\mathrm{in}}
    \biggl( \overline{X}_{t-1}; - \frac{1}{\overline{\alpha}_{t-1}^{\mathrm{out}}} \biggr)
    \right|^2
    \right]
    \Biggr)
    \nonumber
    \\
    &
    \overset{\textrm{(e)}}{=}
    \frac{1}{\xi}
    \cdot
    \left(
    \sigma_{\mathrm{x}}^2
    -
    \mathbb{E}_{\overline{\mathsf{X}}_{t-1}}
    \left[
    \left|
    g_{\mathrm{in}}
    \biggl( \overline{X}_{t-1}; - \frac{1}{\overline{\alpha}_{t-1}^{\mathrm{out}}} \biggr)
    \right|^2
    \right]
    \right),
    \label{equ: v_bar_g_in_rep}
\end{align}
where steps (a) and (b) respectively follow from \eqref{equ: v_bar_div_empirical} and \eqref{equ: div_conv};
(c) from the definition of the asymptotic divergence in (A\ref{alg: gamp_se}-9);
(d) from the well-known identity of input denoiser [\citenum{Takeuchi2020}, Lemma~2], along with the Gaussianity of $H_t$ in (A\ref{alg: gamp_se}-8);
(e) from the following fundamental property of posterior mean~\cite{Kay1993}, \textit{i.e.}, 
\begin{equation}
\mathbb{E}_{\mathsf{X}, \overline{\mathsf{X}}_t} 
\left[ 
\left( X - \mathbb{E}_{\mathsf{X}} \left[X \mid \overline{X}_t \right] \right)^*
\mathbb{E}_{\mathsf{X}} \left[X \mid \overline{X}_t \right]
\right]
=
0,
\end{equation}
as well as the definition of input decision function in \eqref{equ: def_g_in}, namely, 
$g_{\mathrm{in}} \left(\overline{X}_{t}; -1/\overline{\alpha}_t^{\mathrm{out}} \right) = \mathbb{E}_{\mathsf{X}} \left[ X \mid \overline{X}_t \right] $.

Meanwhile, according to the \ac{SE} characterization, the signal-message pairs
$ \bigl\{z_n, \overline{z}_n{(t)} \bigr\}_{n=1}^{N} $
empirically converge to jointly Gaussian random variables characterized by (A\ref{alg: gamp_se}-5).
Specifically, (A\ref{alg: gamp_se}-5) can be rewritten in terms of $\overline{v}$ by substituting
\begin{equation}
\xi^{-1}
\cdot
    \mathbb{E}_{\overline{\mathsf{X}}_{t-1}}
    \!
    \left[
    \left|
    g_{\mathrm{in}}
    \biggl( \overline{X}_{t-1}; - \frac{1}{\overline{\alpha}_{t-1}^{\mathrm{out}}} \biggr)
    \right|^2
    \right]
    \!
    =
    \xi^{-1} \cdot \sigma_{\mathrm{x}}^2 - \overline{v},
\end{equation}
which follows from \eqref{equ: v_bar_g_in_rep}, into (A\ref{alg: gamp_se}-5), arriving at
\begin{equation}
\begin{bmatrix}
Z \\
\overline{Z}_t \\
\end{bmatrix}
\sim
\mathcal{CN}
\left( 
\begin{bmatrix}
0 \\
0 \\
\end{bmatrix},
\begin{bmatrix}
\xi^{-1} 
\sigma_{\mathrm{x}}^2 & 
\xi^{-1}
\sigma_{\mathrm{x}}^2
-
\overline{v}
\\
\xi^{-1}
\sigma_{\mathrm{x}}^2
-
\overline{v}
&
\xi^{-1}
\sigma_{\mathrm{x}}^2
-
\overline{v} \\
\end{bmatrix}
\right).
\label{equ: Z_cov}
\end{equation}
A convenient realization of $ \left(Z,\overline{Z}_t \right)$ satisfying the above covariance structure is to define independent Gaussian random variables
$\overline{Z}_t \sim \mathcal{CN} \left(0, \xi^{-1}\sigma_{\mathrm{x}}^2 - \overline{v} \right) $ and
$B_t \sim \mathcal{CN} \left(0, \overline{v} \right)$, obtaining $Z$ as $ Z = \overline{Z}_t + B_t $, since such a construction satisfies
\begin{subequations}
\begin{align}
&
\mathbb{E}_{\mathsf{Z}} \left[ |Z|^2 \right]
=
\mathbb{E}_{\overline{\mathsf{Z}}_{t}} \left[ \bigl|\overline{Z}_t \bigr|^2 \right] + \mathbb{E}_{\mathsf{B}_t} \left[ \bigl| B_t \bigr|^2 \right] = \xi^{-1} \sigma_{\mathrm{x}}^2,
\\
&
\mathbb{E}_{\mathsf{Z}, \overline{\mathsf{Z}}_t} \left[Z \overline{Z}_t^* \right]
=
\mathbb{E}_{\overline{\mathsf{Z}}_{t}} \left[ \bigl|\overline{Z}_t \bigr|^2 \right]
+
\underbrace{
\mathbb{E}_{\overline{\mathsf{Z}}_t} \left[ \overline{Z}_t^* \right]
\cdot
\mathbb{E}_{\mathsf{B}_t} [ B_t ]}_{= 0 \cdot 0 = 0}
\nonumber
\\
&\qquad \qquad \quad \,\,
=
\xi^{-1} \sigma_{\mathrm{x}}^2 - \overline{v}.
\end{align}
\end{subequations}
Consequently, the triplet
$\left(\overline{Z}_t, Z, Y\right)$
constructed in this manner follows the joint distribution with density given by \eqref{equ: se_joint_pdf}.

\subsection{Reformulation of the Loss Function}

The preceding discussion shows that, in the large-system limit, the signal-message tuples $ \bigl\{\overline{z}_n{(t)},  z_n, y_n \bigr\}_{n=1}^{N} $ empirically converge to a random triplet 
$ \left(\overline{Z}_t, Z, Y \right)$ following a distribution consistent with the asymptotic input $\overline{v}$ to $g_{\mathrm{out}}$, \textit{i.e.}, the one with the joint \ac{PDF} in \eqref{equ: se_joint_pdf}.
Therefore, assuming the large-system limit, one can safely replace the loss function in \eqref{equ: naive_cost_func} with
\begin{align}
&
J_{\mathrm{SM}}(\varTheta; \overline{v})
\nonumber
\\
&
\!\!\!\!\!
\triangleq
\mathbb{E}_{\overline{\mathsf{Z}}_t, \mathsf{Y}}
\!
\left[ 
\left| 
s_{\varTheta} (\overline{Z}_t, Y; \overline{v}) - \frac{\partial}{\partial \overline{Z}_t^*}
\log{
p_{\mathsf{Y} \mid \overline{\mathsf{Z}}_t }
\left(Y \mid \overline{Z}_t \right)}
\right|^2
\right].
\!
\label{equ: e_dsm}
\end{align}
In the following, we establish the equivalence between \eqref{equ: dsm} and \eqref{equ: e_dsm}, up to an additive constant.
To this end, expanding \eqref{equ: e_dsm}, we obtain
\begin{align}
&J_{\mathrm{SM}}(\varTheta; \overline{v})
=
\underbrace{
\mathbb{E}_{ \overline{\mathsf{Z}}_t, \mathsf{Y} }
\left[ 
\left|
s_\varTheta \left(\overline{Z}_t, Y; \overline{v} \right) 
\right|^2
\right]}_{= \textrm{(a)} }
\nonumber
\\
&
\quad
- \,
\underbrace{
\mathbb{E}_{ \overline{\mathsf{Z}}_t, \mathsf{Y} }
\left[ 
s_\varTheta \left(\overline{Z}_t, Y; \overline{v} \right)^*
\cdot
\frac{\partial}{\partial \overline{Z}_t^*}
\log{
p_{\mathsf{Y} \mid \overline{\mathsf{Z}}_t }
\left(Y \mid \overline{Z}_t \right)}
\right]}_{= \textrm{(b)}}
\nonumber
\\
&
\quad
- \,
\left(\mathbb{E}_{ \overline{\mathsf{Z}}_t, \mathsf{Y} }
\left[ 
s_\varTheta \left(\overline{Z}_t, Y; \overline{v} \right)^*
\cdot
\frac{\partial}{\partial \overline{Z}_t^*}
\log{
p_{\mathsf{Y} \mid \overline{\mathsf{Z}}_t }
\left(Y \mid \overline{Z}_t \right)}
\right] \right)^*
\nonumber
\\
&
\quad
+ \, \mathrm{const},
\label{equ: e_dsm_expansion}
\end{align}
where ``$\mathrm{const}$'' denotes a term independent of $\varTheta$.

The term (a) can be expressed as an expectation with respect to $ \left( \overline{Z}_t, Z, Y \right)$ as
\begin{align}
\textrm{(a)}
&=
\int_{\mathbb{C}}
\int_{\mathbb{C}}
\left|
s_\varTheta(\overline{z}, y; \overline{v}) 
\right|^2
\,
p_{ \overline{\mathsf{Z}}_t,  \mathsf{Y} }
(\overline{z}, y)
\,
\mathrm{d} \overline{z}
\mathrm{d} y
\nonumber
\\
&
=
\int_{\mathbb{C}}
\int_{\mathbb{C}}
\int_{\mathbb{C}}
\left|
s_\varTheta(\overline{z}, y; \overline{v}) 
\right|^2
\,
p_{ \overline{\mathsf{Z}}_t, \mathsf{Z}, \mathsf{Y} }
(\overline{z}, z, y)
\,
\mathrm{d} \overline{z}
\mathrm{d} z
\mathrm{d} y
\nonumber
\\
&
=
\mathbb{E}_{ \overline{\mathsf{Z}}_t,  \mathsf{Z}, \mathsf{Y}  }
\left[ 
\left|
s_\varTheta \left(\overline{Z}_t, Y; \overline{v} \right) 
\right|^2
\right].
\label{equ: a_result}
\end{align}
For the term (b), the following Lemma~\ref{lem: (b)_dsm} will be useful.
\begin{lemma}
\label{lem: (b)_dsm}
    Let $(\overline{Z}_t, Z, Y)$ be a random triplet with their joint \ac{PDF} given by \eqref{equ: se_joint_pdf}. 
    Then, the following equality holds:
    \begin{align}
    &
    \mathbb{E}_{ \overline{\mathsf{Z}}_t, \mathsf{Y} }
    \left[ 
    s_\varTheta \left(\overline{Z}_t, Y; \overline{v} \right)^*
    \cdot
    \frac{\partial}{\partial \overline{Z}_t^*}
    \log{
    p_{\mathsf{Y} \mid \overline{\mathsf{Z}}_t }
    \left(Y \mid \overline{Z}_t \right)}
    \right]  
    \nonumber
    \\
    &=
    \mathbb{E}_{ \overline{\mathsf{Z}}_t,  \mathsf{Z}, \mathsf{Y}  }
    \left[ 
    s_\varTheta \left(\overline{Z}_t, Y; \overline{v} \right)^*
    \cdot
    \frac{ Z - \overline{Z}_t }{\overline{v}}
    \right].
    \label{equ: (b)_dsm}
\end{align} 
\end{lemma}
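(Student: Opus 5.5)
The argument is a Vincent-type denoising score-matching computation, carried out for the Gaussian-prior model of Fig.~\ref{fig: scalar_models}(a) rather than the AWGN model. Here the score is taken with respect to the conditioning variable $\overline{z}$, not with respect to the observation. The structural fact that makes it work is that the factor depending on $\overline{z}$ inside the integral defining $p_{\mathsf{Y} \mid \overline{\mathsf{Z}}_t}$ is the Gaussian kernel $\frac{1}{\pi \overline{v}}e^{-|z-\overline{z}|^2/\overline{v}}$. Its Wirtinger derivative with respect to $\overline{z}^*$ is
\begin{equation}
\frac{z-\overline{z}}{\overline{v}} \cdot \frac{1}{\pi \overline{v}}e^{-|z-\overline{z}|^2/\overline{v}},
\end{equation}
because $\frac{\partial}{\partial \overline{z}^*}|z-\overline{z}|^2 = -(z-\overline{z})$.

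The plan has three steps.

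First, write the left-hand side of \eqref{equ: (b)_dsm} as a double integral over $(\overline{z},y)$ against $p_{\overline{\mathsf{Z}}_t,\mathsf{Y}}(\overline{z},y) = p_{\overline{\mathsf{Z}}_t}(\overline{z})\, p_{\mathsf{Y}\mid\overline{\mathsf{Z}}_t}(y\mid\overline{z})$. From \eqref{equ: se_joint_pdf}, the marginal conditional is
\begin{equation}
p_{\mathsf{Y}\mid\overline{\mathsf{Z}}_t}(y\mid\overline{z}) = \int_{\mathbb{C}} \frac{e^{-|z-\overline{z}|^2/\overline{v}}}{\pi\overline{v}}\, p_{\mathsf{Y}\mid\mathsf{Z}}(y\mid z)\,\mathrm{d}z,
\end{equation}
which is exactly \eqref{equ: p_y_z_ol}. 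The log-derivative then cancels the density factor, using
\begin{equation}
p_{\mathsf{Y}\mid\overline{\mathsf{Z}}_t}\cdot \frac{\partial}{\partial\overline{z}^*}\log p_{\mathsf{Y}\mid\overline{\mathsf{Z}}_t} = \frac{\partial}{\partial\overline{z}^*}p_{\mathsf{Y}\mid\overline{\mathsf{Z}}_t},
\end{equation}
valid wherever $p_{\mathsf{Y}\mid\overline{\mathsf{Z}}_t}>0$. The set where it vanishes carries no mass and can be ignored. The left-hand side becomes
\begin{equation}
\iint s_\varTheta(\overline{z},y;\overline{v})^*\, p_{\overline{\mathsf{Z}}_t}(\overline{z})\, \frac{\partial}{\partial\overline{z}^*}p_{\mathsf{Y}\mid\overline{\mathsf{Z}}_t}(y\mid\overline{z})\,\mathrm{d}\overline{z}\,\mathrm{d}y .
\end{equation}
The prior factor $p_{\overline{\mathsf{Z}}_t}$ is never differentiated here, because the score is that of the conditional $p_{\mathsf{Y}\mid\overline{\mathsf{Z}}_t}$ and not of the joint. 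This is the departure from classical DSM, and it is what keeps the computation clean.

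Second, move $\partial/\partial\overline{z}^*$ inside the $z$-integral and apply the kernel identity above. The integrand becomes
\begin{equation}
s_\varTheta^*\, p_{\overline{\mathsf{Z}}_t}(\overline{z})\,\frac{z-\overline{z}}{\overline{v}}\,\frac{e^{-|z-\overline{z}|^2/\overline{v}}}{\pi\overline{v}}\,p_{\mathsf{Y}\mid\mathsf{Z}}(y\mid z),
\end{equation}
which equals $s_\varTheta^*\,\frac{z-\overline{z}}{\overline{v}}\,p_{\overline{\mathsf{Z}}_t,\mathsf{Z},\mathsf{Y}}(\overline{z},z,y)$ by \eqref{equ: se_joint_pdf}. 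Integrating over $(\overline{z},z,y)$ then gives the right-hand side of \eqref{equ: (b)_dsm}. This is the same computation the paper uses to verify Lemma~\ref{lem: tweedie_output}.

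Third, justify the interchanges. This is the only real obstacle. Differentiation may pass under the $z$-integral by the regularity condition assumed in Proposition~\ref{prop: se_dsm}, namely polynomial growth of $p_{\mathsf{Y}\mid\mathsf{Z}}$ in $z$ against a Gaussian kernel, whose $\overline{z}$-derivative is dominated locally uniformly in $\overline{z}$. The triple integral may be rearranged by Fubini. For this I would bound
\begin{equation}
\mathbb{E}\bigl[|s_\varTheta|\,|Z-\overline{Z}_t|/\overline{v}\bigr] \le \mathbb{E}\bigl[|s_\varTheta|^2\bigr]^{1/2}\,\mathbb{E}\bigl[|B_t|^2\bigr]^{1/2}/\overline{v} < \infty
\end{equation}
by Cauchy--Schwarz. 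This uses the assumed finiteness of $\mathbb{E}\bigl[|s_\varTheta|^2\bigr]$ and $\mathbb{E}\bigl[|B_t|^2\bigr] = \overline{v}$. The left-hand side is similarly integrable via $\mathbb{E}\bigl[|g_{\mathrm{out}}|^2\bigr]<\infty$.
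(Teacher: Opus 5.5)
Your proposal is correct and follows essentially the same route as the paper's appendix proof. You cancel $p_{\mathsf{Y}\mid\overline{\mathsf{Z}}_t}$ against its log-derivative, write it as the $z$-integral of the Gaussian kernel times $p_{\mathsf{Y}\mid\mathsf{Z}}$, pass $\partial/\partial\overline{z}^*$ inside, and recognize $\frac{z-\overline{z}}{\overline{v}}$ times the joint density; the paper does this last step via $\frac{\partial}{\partial\overline{z}^*}\log p_{\mathsf{Z}\mid\overline{\mathsf{Z}}_t}$, which is the same computation, and your Cauchy--Schwarz justification of Fubini is a small addition the paper leaves implicit in its regularity assumptions.
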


\begin{IEEEproof}
See Appendix~\ref{chap: app_dsm_deriv}.
\end{IEEEproof}
Therefore, by substituting \eqref{equ: a_result} and \eqref{equ: (b)_dsm} into \eqref{equ: e_dsm_expansion}, we establish the equivalence between $J_{\mathrm{SM}}$ and $J_{\mathrm{DSM}}$ up to an additive constant as
\begin{align}
&J_{\mathrm{SM}}(\varTheta; \overline{v})
\nonumber
\\
&
=
\mathbb{E}_{ \overline{\mathsf{Z}}_t,  \mathsf{Z}, \mathsf{Y}  }
\left[ 
\left|
s_\varTheta \left(\overline{Z}_t, Y; \overline{v} \right) 
\right|^2
\right]
\nonumber
\\
&
\quad
-
\mathbb{E}_{ \overline{\mathsf{Z}}_t,  \mathsf{Z}, \mathsf{Y}  }
\left[ 
s_\varTheta \left(\overline{Z}_t, Y; \overline{v} \right)^*
\cdot
\frac{ Z - \overline{Z}_t }{\overline{v}}
\right]
\nonumber
\\
&
\quad
-
\mathbb{E}_{ \overline{\mathsf{Z}}_t,  \mathsf{Z}, \mathsf{Y}  }
\left[ 
s_\varTheta \left(\overline{Z}_t, Y; \overline{v} \right)
\cdot
\frac{ \left( Z - \overline{Z}_t \right)^* }{\overline{v}}
\right] + \mathrm{const}
\nonumber
\\
&
=
\underbrace{
\mathbb{E}_{ \overline{\mathsf{Z}}_t, \mathsf{Z}, \mathsf{Y} }
\!
\left[ 
\left| 
s_{\varTheta} \left(\overline{Z}_t, Y; \overline{v}\right)
+
\frac{ \overline{Z}_t - Z }{\overline{v}}
\right|^2
\right]}_{= J_{\mathrm{DSM}}(\varTheta. \overline{v}) } + \, \mathrm{const}.
\label{equ: sm_and_dsm}
\end{align}

\subsection{Asymptotic Invariance of the Achievable MSE}
Finally, we briefly show that replacing $g_{\mathrm{out}}(\overline{z}; \overline{v}, y)$ with the trained network $s_{\varTheta^\star}(\overline{z}, y; \overline{v})$ in (A\ref{alg: gamp_sv_div}-4,~5) of Algorithm~\ref{alg: gamp_sv_div}, referred to as \emph{SE-DSM-GAMP}, asymptotically achieves the same $\mathrm{MSE}(t)$ in \eqref{equ: def_mse} for $t = 1, 2, \ldots, T$ as Algorithm~\ref{alg: gamp_sv_div} in the large-system limit.

From the definition of $\varTheta^\star$ in \eqref{equ: def_theta_star}, together with the relationship between $J_{\mathrm{SM}}$ and $J_{\mathrm{DSM}}$ in \eqref{equ: sm_and_dsm} and the definition of $J_{\mathrm{SM}}$ in \eqref{equ: e_dsm}, the optimal parameter $\varTheta=\varTheta^\star$ satisfies
\begin{equation}
\mathbb{E}_{\overline{\mathsf{Z}}_t, \mathsf{Y}}
\left[
\left|
s_{\varTheta^\star} (\overline{Z}_t, Y; \overline{v}) - 
g_{\mathrm{out}} \left(\overline{Z}_t; \overline{v}, Y \right)
\right|^2
\right]
= 0,
\end{equation}
for a sufficiently rich network $s_{\varTheta}$.
This equivalently means that the equality
\begin{equation}
s_{\varTheta^\star} (\overline{z}, y; \overline{v}) = 
g_{\mathrm{out}} \left(\overline{z}; \overline{v}, y \right)
\end{equation}
holds almost everywhere on the support of the joint distribution in \eqref{equ: se_joint_pdf}.
Therefore, for SE-DSM-GAMP, applying the standard \ac{SE} analysis for a general output decision function\footnote{Note that the \ac{SE} analysis remains valid for SE-DSM-GAMP with an arbitrary Lipschitz-continuous network $s_{\varTheta}$, but it generally leads to a different (generalized) \ac{SE} recursion than that of Algorithm~\ref{alg: gamp_se}.} yields the same \ac{SE} recursion as in Algorithm~\ref{alg: gamp_se}.
Consequently, from \eqref{equ: def_empirical_convergence}, \eqref{equ: se_conv_in}, and \eqref{equ: div_conv}, the asymptotic MSE
\begin{align}
\mathrm{MSE}(t)
&=
\lim_{N = \xi M \to \infty}
\frac{1}{M} \left\| \hat{\bm{x}}(t+1) - \bm{x} \right\|^2 
\nonumber
\\
&=
\lim_{N = \xi M \to \infty}
\frac{1}{M} \left\| g_{\mathrm{in}} \left( \overline{\bm{x}}(t); -\frac{1}{\alpha^{\mathrm{out}}(t)} \right) - \bm{x} \right\|^2 
\nonumber
\\
&
\overset{\mathrm{a.s.}}{=}
\mathbb{E}_{\mathsf{X}, \overline{\mathsf{X}}_{t}}
\left[ 
\left| X - g_{\mathrm{in}} \left( \overline{X}_{t}; - \frac{1}{\overline{\alpha}_{t}^{\mathrm{out}}} \right) \right|^2
\right]
\end{align}
remains unchanged between the two algorithms, thereby establishing Proposition~\ref{prop: se_dsm}.

%% file: ALG/gamp_se.tex


\begin{algorithm*}[!t]
\caption{ State Evolution Recursion of Complex-Valued Bayes-GAMP}
\label{alg: gamp_se}
\begin{algorithmic}[1]


%
%
\For{$t=1 \,\, \textrm{to} \,\, T$}

\If {$ t = 1 $}
\\
\quad
$ Z \sim \mathcal{CN} \left(0, \xi^{-1} \cdot \sigma_{\mathrm{x}}^2 \right), \,\, \overline{Z}_1 = 0, \,\, 
-\overline{\alpha}_1^{\mathrm{in}}/\overline{\alpha}_0^{\mathrm{out}} = \sigma_{\mathrm{x}}^2. $
\Else \\
\quad
$ 
\displaystyle
\begin{bmatrix}
Z \\
\overline{Z}_t \\
\end{bmatrix}
\sim
\mathcal{CN}
\left( 
\begin{bmatrix}
0 \\
0 \\
\end{bmatrix},
\frac{1}{\xi} 
\begin{bmatrix}
\sigma_{\mathrm{x}}^2 & 
\mathbb{E}_{\overline{\mathsf{X}}_{t-1}} \left[ \left|g_{\mathrm{in}} \left(\overline{X}_{t-1}; -\frac{1}{\overline{\alpha}_{t-1}^{\mathrm{out}}} \right)\right|^2 \right]
\\
\mathbb{E}_{\overline{\mathsf{X}}_{t-1}} \left[ \left|g_{\mathrm{in}} \left(\overline{X}_{t-1}; -\frac{1}{\overline{\alpha}_{t-1}^{\mathrm{out}}} \right)\right|^2 \right]
&
\mathbb{E}_{\overline{\mathsf{X}}_{t-1}} \left[ \left|g_{\mathrm{in}} \left(\overline{X}_{t-1}; -\frac{1}{\overline{\alpha}_{t-1}^{\mathrm{out}}} \right)\right|^2 \right] \\
\end{bmatrix}
\right) $
\EndIf

\State $
\displaystyle
\overline{\alpha}_t^{\mathrm{out}}
=
\mathbb{E}_{\mathsf{Z}, \overline{\mathsf{Z}}_t, \mathsf{W}}
\left[
g_{\mathrm{out}}'
\left(
\overline{Z}_t;
-
\frac{1}{\xi}
\cdot
\frac{\overline{\alpha}^{\mathrm{in}}_t}{\overline{\alpha}^{\mathrm{out}}_{t-1}},
h \left( Z, W \right)
 \right)
\right] $

\State $ \displaystyle
\overline{X}_t = X + H_t,
\quad
H_t \sim \mathcal{CN}\left( 0, -1/\overline{\alpha}_t^{\mathrm{out}} \right)
$
\State $ \displaystyle
\overline{\alpha}_{t+1}^{\mathrm{in}}
=
\mathbb{E}_{\overline{\mathsf{X}}_t}
\left[
g_{\mathrm{in}}'
\left(
\overline{X}_t;
-
1/\overline{\alpha}^{\mathrm{out}}_{t}
\right)
\right] $
\EndFor
\end{algorithmic}
\end{algorithm*}

%% file: TXT/conc.tex
In this paper, we proposed a novel framework, termed \emph{SE-DSM}, for emulating the output decision function of Bayes-\ac{GAMP} for complex-valued \acp{GLM}.
Although Bayes-\ac{GAMP} has strong theoretical appeal, including \ac{SE}-based characterization of its message dynamics and asymptotic Bayes-optimality guarantees, its closed-form implementation is possible only for a limited class of observation models.
This limitation stems mainly from the restrictive requirement that the score function associated with the nonlinear observation mapping $h$, and hence the corresponding \ac{MMSE} output denoiser, admit a closed-form expression.
To bridge the gap between such theoretical guarantees and practical implementation, we developed a principled methodology that combines \ac{SE}-based asymptotic analysis with data-driven function approximation.
Specifically, SE-DSM enables the output decision function to be learned even when it is analytically intractable, requiring neither an explicit functional form of $h$ nor explicit evaluation of the true score function; only forward evaluations of $h$ are needed during offline training.
We also proved that, under ideal SE-DSM training, \ac{GAMP} equipped with the trained network asymptotically achieves the same performance as Bayes-\ac{GAMP} equipped with the exact but analytically intractable output decision function.
These results provide a practical route to extending Bayes-\ac{GAMP} to substantially broader classes of complex-valued nonlinear observation models while preserving its asymptotic performance.

%% file: TXT/appendix.tex
\section{Proof of Lemma~\ref{lem: (b)_dsm}}
\label{chap: app_dsm_deriv}

By expanding the expectation as an explicit integral, \eqref{equ: (b)_dsm} can be rewritten as \eqref{equ: dsm_b_result},
\begin{figure*}[t]
\begin{align}
&
\nonumber
\mathbb{E}_{ \overline{\mathsf{Z}}_t, \mathsf{Y} }
    \left[ 
    s_\varTheta \left(\overline{Z}_t, Y; \overline{v} \right)^*
    \cdot
    \frac{\partial}{\partial \overline{Z}_t^*}
    \log{
    p_{\mathsf{Y} \mid \overline{\mathsf{Z}}_t }
    \left(Y \mid \overline{Z}_t \right)}
    \right]
\\
&\quad=
\int_{\mathbb{C}}
\int_{\mathbb{C}}
s_\varTheta(\overline{z}, y; \overline{v})^*
\left[
\frac{\partial}{\partial \overline{z}^*}
\log{
p_{\mathsf{Y} \mid \overline{\mathsf{Z}}_t }
\left(y \mid \overline{z} \right)}
\right]
p_{ \overline{\mathsf{Z}}_t,  \mathsf{Y} }
(\overline{z}, y)
\,
\mathrm{d} \overline{z}
\mathrm{d} y
\nonumber
\\
&
\quad=
\int_{\mathbb{C}}
\int_{\mathbb{C}}
s_\varTheta(\overline{z}, y; \overline{v})^*
\cdot
\frac{
\frac{\partial}{\partial \overline{z}^*}
p_{\mathsf{Y} \mid \overline{\mathsf{Z}}_t }
\left(y \mid \overline{z} \right)
}{
p_{\mathsf{Y} \mid \overline{\mathsf{Z}}_t }
\left(y \mid \overline{z} \right)
}
\cdot
p_{ \overline{\mathsf{Z}}_t }
(\overline{z})
\cdot
p_{  \mathsf{Y} \mid \overline{\mathsf{Z}}_t  }
(y \mid \overline{z})
\,
\mathrm{d} \overline{z}
\mathrm{d} y
\nonumber
\\
&
\quad=
\int_{\mathbb{C}}
\int_{\mathbb{C}}
s_\varTheta(\overline{z}, y; \overline{v})^*
\left[
\frac{\partial}{\partial \overline{z}^*}
p_{\mathsf{Y} \mid \overline{\mathsf{Z}}_t }
\left(y \mid \overline{z} \right)
\right]
p_{ \overline{\mathsf{Z}}_t }
(\overline{z})
\,
\mathrm{d} \overline{z}
\mathrm{d} y
\nonumber
\\
&
\quad=
\int_{\mathbb{C}}
\int_{\mathbb{C}}
s_\varTheta(\overline{z}, y; \overline{v})^*
\left[
\frac{\partial}{\partial \overline{z}^*}
\int_{\mathbb{C}}
p_{\mathsf{Z}, \mathsf{Y} \mid \overline{\mathsf{Z}}_t }
\left(z, y \mid \overline{z} \right)
\, \mathrm{d}z
\right]
p_{ \overline{\mathsf{Z}}_t }
(\overline{z})
\,
\mathrm{d} \overline{z}
\mathrm{d} y
\nonumber
\\
&
\quad=
\int_{\mathbb{C}}
\int_{\mathbb{C}}
s_\varTheta(\overline{z}, y; \overline{v})^*
\left[
\frac{\partial}{\partial \overline{z}^*}
\!\!
\int_{\mathbb{C}}
p_{\mathsf{Z} \mid \overline{\mathsf{Z}}_t }
\left(z \mid \overline{z} \right)
p_{\mathsf{Y} \mid \mathsf{Z} }
\left(y \mid z \right)
\, \mathrm{d}z
\right]
p_{ \overline{\mathsf{Z}}_t }
(\overline{z})
\,
\mathrm{d} \overline{z}
\mathrm{d} y
\nonumber
\\
&
\quad=
\int_{\mathbb{C}}
\int_{\mathbb{C}}
s_\varTheta(\overline{z}, y; \overline{v})^*
\left[
\int_{\mathbb{C}}
\left[
\frac{\partial}{\partial \overline{z}^*}
p_{\mathsf{Z} \mid \overline{\mathsf{Z}}_t }
\left(z \mid \overline{z} \right)
\right]
p_{\mathsf{Y} \mid  \mathsf{Z} }
\left(y \mid  z \right)
\, \mathrm{d}z
\right]
p_{ \overline{\mathsf{Z}}_t }
(\overline{z})
\,
\mathrm{d} \overline{z}
\mathrm{d} y
\nonumber
\\
&
\quad=
\int_{\mathbb{C}}
\int_{\mathbb{C}}
\int_{\mathbb{C}}
s_\varTheta(\overline{z}, y; \overline{v})^*
\left[
\frac{\partial}{\partial \overline{z}^*}
p_{\mathsf{Z} \mid \overline{\mathsf{Z}}_t }
\left(z \mid \overline{z} \right)
\right]
p_{\mathsf{Y} \mid \mathsf{Z} }
\left(y \mid z \right)
p_{ \overline{\mathsf{Z}}_t }
(\overline{z})
\,
\mathrm{d} \overline{z}
\mathrm{d}z
\mathrm{d} y
\nonumber
\\
&
\quad=
\int_{\mathbb{C}}
\int_{\mathbb{C}}
\int_{\mathbb{C}}
s_\varTheta(\overline{z}, y; \overline{v})^*
\cdot
\frac{
\frac{\partial}{\partial \overline{z}^*}
p_{\mathsf{Z} \mid \overline{\mathsf{Z}}_t }
\left(z \mid \overline{z} \right)}
{ p_{\mathsf{Z} \mid \overline{\mathsf{Z}}_t }
\left(z \mid \overline{z} \right) }
\cdot
\underbrace{
p_{ \overline{\mathsf{Z}}_t }
(\overline{z})
p_{\mathsf{Z} \mid \overline{\mathsf{Z}}_t }
\left(z \mid \overline{z} \right)
p_{\mathsf{Y} \mid \mathsf{Z} }
\left(y \mid z \right)}_{= p_{\overline{\mathsf{Z}}_t, \mathsf{Z}, \mathsf{Y}} (\overline{z}, z, y) \textrm{ from \eqref{equ: se_joint_pdf}} }
\,
\mathrm{d} \overline{z}
\mathrm{d}z
\mathrm{d} y
\nonumber
\\
&
\quad=
\int_{\mathbb{C}}
\int_{\mathbb{C}}
\int_{\mathbb{C}}
s_\varTheta(\overline{z}, y; \overline{v})^*
\left[ 
\frac{\partial}{\partial \overline{z}^*}
\log{ p_{\mathsf{Z} \mid \overline{\mathsf{Z}}_t }
\left(z \mid \overline{z} \right) }
\right]
p_{\overline{\mathsf{Z}}_t, \mathsf{Z}, \mathsf{Y}} (\overline{z}, z, y)
\,
\mathrm{d} \overline{z}
\mathrm{d}z
\mathrm{d} y
\nonumber
\\
&
\quad=
\mathbb{E}_{ \overline{\mathsf{Z}}_t,  \mathsf{Z}, \mathsf{Y}  }
\left[ 
s_\varTheta \left(\overline{Z}_t, Y; \overline{v} \right)^*
\cdot
\frac{\partial}{\partial \overline{Z}_t^*}
\log{ p_{\mathsf{Z} \mid \overline{\mathsf{Z}}_t }
\left(Z \mid \overline{Z}_t \right) }
\right]
=
\mathbb{E}_{ \overline{\mathsf{Z}}_t,  \mathsf{Z}, \mathsf{Y}  }
\left[ 
s_\varTheta \left(\overline{Z}_t, Y; \overline{v} \right)^*
\cdot
\frac{ Z - \overline{Z}_t }{\overline{v}}
\right].
\label{equ: dsm_b_result}
\end{align}
\hrule
\end{figure*}
where the last equality in \eqref{equ: dsm_b_result} follows from 
taking the logarithm of
$ p_{\mathsf{Z} \mid \overline{\mathsf{Z}}_t} (z \mid \overline{z}) = (\pi \overline{v})^{-1} \cdot e^{- \frac{|z - \overline{z}|^2}{\overline{v}}} $ in \eqref{equ: se_joint_pdf}
and then differentiating with respect to $\overline{z}^*$, \textit{i.e.},
\begin{align}
&
\frac{\partial}{\partial \overline{z}^*}
\log{
p_{\mathsf{Z} \mid \overline{\mathsf{Z}}_t }
\left(z \mid \overline{z} \right)}
 =
\frac{\partial}{\partial \overline{z}^*}
\left[
- \frac{ |z - \overline{z}|^2 }{\overline{v}} - \log{(\pi \overline{v})} \right]
\nonumber
\\
& \!\!\!\!\!\! =
-
\frac{\partial}{\partial \overline{z}^*}
\left[
\frac{(\overline{z}^* - z^*) (\overline{z} - z)}{\overline{v}}
\right]
=
\frac{z - \overline{z}}{\overline{v}}.
\end{align}
%

%% file: REF/conf_abbrv.bib
@string{ isit = {Proc. ISIT}}


%% file: REF/ref.bib
@INPROCEEDINGS{Rangan2011,
author={S. {Rangan}},
booktitle={Proc. 2011 IEEE Int. Symp. Inf. Theory (ISIT)},
title={Generalized approximate message passing for estimation with random linear mixing},
year={2011},
volume={},
number={},
pages={2168-2172},
doi={10.1109/ISIT.2011.6033942},
ISSN={2157-8095},
month={July},
}

@article{Donoho2009,
title={Message-passing algorithms for compressed sensing},
author={Donoho, David L and Maleki, Arian and Montanari, Andrea},
journal={Proc. of the National Academy of Sciences},
volume={106},
number={45},
pages={18914--18919},
year={2009},
publisher={National Acad Sciences}
}

@ARTICLE{Wen2016,
  author={C. {Wen} and C. {Wang} and S. {Jin} and K. {Wong} and P. {Ting}},
  journal={IEEE Trans. Signal Process.}, 
  title={{B}ayes-Optimal Joint Channel-and-Data Estimation for Massive {MIMO} With Low-Precision {ADC}s}, 
  year={2016},
  volume={64},
  number={10},
  pages={2541-2556},
  }

@ARTICLE{Schniter2015,
  author={P. {Schniter} and S. {Rangan}},
  journal={IEEE Trans. Signal Processing}, 
  title={Compressive Phase Retrieval via Generalized Approximate Message Passing}, 
  year={2015},
  volume={63},
  number={4},
  pages={1043-1055},
  }

@ARTICLE{Bayati2011,
  author={Bayati, Mohsen and Montanari, Andrea},
  journal={IEEE Trans. Inf. Theory}, 
  title={The Dynamics of Message Passing on Dense Graphs, with Applications to Compressed Sensing}, 
  year={2011},
  volume={57},
  number={2},
  pages={764-785},
  doi={10.1109/TIT.2010.2094817}}

@ARTICLE{Meng2018,
  author={Meng, Xiangming and Wu, Sheng and Zhu, Jiang},
  journal={IEEE Signal Process. Lett.}, 
  title={A Unified {B}ayesian Inference Framework for Generalized Linear Models}, 
  year={2018},
  volume={25},
  number={3},
  pages={398-402},
  doi={10.1109/LSP.2017.2789163}}

@ARTICLE{Kschischang2001,
  author={Kschischang, F.R. and Frey, B.J. and Loeliger, H.-A.},
  journal={IEEE Trans. Inf. Theory}, 
  title={Factor graphs and the sum-product algorithm}, 
  year={2001},
  volume={47},
  number={2},
  pages={498-519},
  doi={10.1109/18.910572}}

@ARTICLE{Ma2017,
  author={Ma, Junjie and Ping, Li},
  journal={IEEE Access}, 
  title={Orthogonal {AMP}}, 
  year={2017},
  volume={5},
  number={},
  pages={2020-2033},
  doi={10.1109/ACCESS.2017.2653119}}

@ARTICLE{Rangan2019,
  author={Rangan, Sundeep and Schniter, Philip and Fletcher, Alyson K.},
  journal={IEEE Trans. Inf. Theory}, 
  title={Vector Approximate Message Passing}, 
  year={2019},
  volume={65},
  number={10},
  pages={6664-6684},
  doi={10.1109/TIT.2019.2916359}}

@ARTICLE{Takeuchi2020,
  author={Takeuchi, Keigo},
  journal={IEEE Trans. Inf. Theory}, 
  title={Rigorous Dynamics of Expectation-Propagation-Based Signal Recovery from Unitarily Invariant Measurements}, 
  year={2020},
  volume={66},
  number={1},
  pages={368-386},
  doi={10.1109/TIT.2019.2947058}}

@ARTICLE{Kamilov2012,
  author={Kamilov, Ulugbek S. and Goyal, Vivek K and Rangan, Sundeep},
  journal={IEEE Trans. Signal Process.}, 
  title={Message-Passing De-Quantization With Applications to Compressed Sensing}, 
  year={2012},
  volume={60},
  number={12},
  pages={6270-6281},
  doi={10.1109/TSP.2012.2217334}}

@ARTICLE{Yang2020,
  author={Yang, Xi and Wen, Chao-Kai and Jin, Shi and Swindlehurst, A. Lee},
  journal={IEEE Trans. Signal Process.}, 
  title={{B}ayes-Optimal {MMSE} Detector for Massive {MIMO} Relaying With Low-Precision {ADC}s/{DAC}s}, 
  year={2020},
  volume={68},
  number={},
  pages={3341-3357},
  doi={10.1109/TSP.2020.2977265}}

@ARTICLE{Yang2021,
  author={Yang, Chaosan and Liu, Xiaobei and Guan, Yong Liang and Liu, Rongke},
  journal={IEEE Commun. Lett.}, 
  title={Fast {GAMP} Algorithm for Nonlinearly Distorted {OFDM} Signals}, 
  year={2021},
  volume={25},
  number={5},
  pages={1682-1686},
  doi={10.1109/LCOMM.2021.3056657}}

@INPROCEEDINGS{Schniter2016,
  author={Schniter, Philip and Rangan, Sundeep and Fletcher, Alyson K.},
  booktitle={Proc. Conf. Rec. Asilomar Conf. Signals Syst.}, 
  title={Vector approximate message passing for the generalized linear model}, 
  year={2016},
  volume={},
  number={},
  pages={1525-1529},
  doi={10.1109/ACSSC.2016.7869633}}

@ARTICLE{Zou2018,
  author={Zou, Qiuyun and Zhang, Haochuan and Wen, Chao-Kai and Jin, Shi and Yu, Rong},
  journal={IEEE Signal Process. Lett.}, 
  title={Concise Derivation for Generalized Approximate Message Passing Using Expectation Propagation}, 
  year={2018},
  volume={25},
  number={12},
  pages={1835-1839},
  doi={10.1109/LSP.2018.2876806}}

@ARTICLE{Liu2019tvt,
  author={Liu, Lei and Li, Ying and Huang, Chongwen and Yuen, Chau and Guan, Yong Liang},
  journal={IEEE Trans. Veh. Technol.}, 
  title={A New Insight Into {GAMP} and {AMP}}, 
  year={2019},
  volume={68},
  number={8},
  pages={8264-8269},
  doi={10.1109/TVT.2019.2926229}}

@article{
Barbier2019,
author = {Jean Barbier  and Florent Krzakala  and Nicolas Macris  and Léo Miolane  and Lenka Zdeborová },
title = {Optimal errors and phase transitions in high-dimensional generalized linear models},
journal = {Proc. Natl. Acad. Sci.},
volume = {116},
number = {12},
pages = {5451-5460},
year = {2019},
doi = {10.1073/pnas.1802705116},
URL = {https://www.pnas.org/doi/abs/10.1073/pnas.1802705116},
eprint = {https://www.pnas.org/doi/pdf/10.1073/pnas.1802705116}}

@inproceedings{Pearl1982,
author = {Pearl, Judea},
title = {Reverend {B}ayes on inference engines: a distributed hierarchical approach},
year = {1982},
publisher = {AAAI Press},
booktitle = {Proc. 2nd AAAI Conf. Artif. Intell.},
pages = {133–136},
numpages = {4},
location = {Pittsburgh, Pennsylvania},
series = {AAAI'82}
}

@ARTICLE{Takeuchi2025,
  author={Takeuchi, Keigo},
  journal={IEEE Trans. Inf. Theory}, 
  title={Generalized Approximate Message-Passing for Compressed Sensing With Sublinear Sparsity}, 
  year={2025},
  volume={71},
  number={6},
  pages={4602-4636},
  doi={10.1109/TIT.2025.3560070}}

@ARTICLE{Chi2024,
  author={Chi, Yuhao and Chen, Xuehui and Liu, Lei and Li, Ying and Bai, Baoming and Al Hammadi, Ahmed and Yuen, Chau},
  journal={IEEE Trans. Commun.}, 
  title={{GAMP} or {GOAMP}/{GVAMP} Receiver in Generalized Linear Systems: Achievable Rate, Coding Principle, and Comparative Study}, 
  year={2024},
  volume={72},
  number={10},
  pages={6237-6253},
  doi={10.1109/TCOMM.2024.3395337}}

@ARTICLE{Javanmard2013,
  author={Javanmard, Adel and Montanari, Andrea},
  journal={Inf. Inference: A Journal of the IMA}, 
  title={State evolution for general approximate message passing algorithms, with applications to spatial coupling}, 
  year={2013},
  volume={2},
  number={2},
  pages={115-144},
  doi={10.1093/imaiai/iat004}}

@book{Pace1997book,
  title={Principles Of Statistical Inference From A Neo-fisherian Perspective},
  author={Pace, L. and Salvan, A.},
  isbn={9789813103016},
  series={Advanced Series On Statistical Science And Applied Probability},
  year={1997},
  publisher={World Scientific Publishing Company}
}

@misc{Rangan2012arxiv,
      title={Generalized Approximate Message Passing for Estimation with Random Linear Mixing}, 
      author={Sundeep Rangan},
      year={2012},
      eprint={1010.5141},
      archivePrefix={arXiv},
      primaryClass={cs.IT},
      url={https://arxiv.org/abs/1010.5141}, 
}

@ARTICLE{Zhu2019spl,
  author={Zhu, Jiang and Yuan, Qiumeng and Song, Chunyi and Xu, Zhiwei},
  journal={IEEE Signal Process. Lett.}, 
  title={Phase Retrieval From Quantized Measurements via Approximate Message Passing}, 
  year={2019},
  volume={26},
  number={7},
  pages={986-990},
  doi={10.1109/LSP.2019.2916668}}

@ARTICLE{Ma2019TIT,
  author={Ma, Junjie and Xu, Ji and Maleki, Arian},
  journal={IEEE Trans. Inf. Theory}, 
  title={Optimization-Based {AMP} for Phase Retrieval: The Impact of Initialization and  $\ell_{2}$  Regularization}, 
  year={2019},
  volume={65},
  number={6},
  pages={3600-3629},
  doi={10.1109/TIT.2019.2893254}}

@INPROCEEDINGS{Zhidkov2019vtc,
  author={Zhidkov, Sergey V. and Dinis, Rui},
  booktitle = {Proc. IEEE Veh. Technol. Conf. (VTC-Spring)}, 
  title={Belief Propagation Receivers for Near-Optimal Detection of Nonlinearly Distorted {OFDM} Signals}, 
  year={2019},
  volume={},
  number={},
  pages={1-6},
  doi={10.1109/VTCSpring.2019.8746319}}

@ARTICLE{Vincent2011,
  author={Vincent, Pascal},
  journal={Neural Comput.}, 
  title={A Connection Between Score Matching and Denoising Autoencoders}, 
  year={2011},
  volume={23},
  number={7},
  pages={1661-1674},
  doi={10.1162/NECO_a_00142}}

@article{Hyvarinen2005,
  author  = {A. Hyv{\"a}rinen},
  title   = {Estimation of Non-Normalized Statistical Models by Score Matching},
  journal = {J. Mach. Learn. Res.},
  volume    = {6},
  number     = {24},
  pages     = {695--709},
  year    = {2005}
}

@article{Feng2022,
  author  = {O. Y. Feng and R. Venkataramanan and C. Rush and R. J. Samworth},
  title   = {A Unifying Tutorial on Approximate Message Passing},
  journal = {Found. Trends Mach. Learn.},
  volume    = {15},
  number     = {4},
  pages     = {335--536},
  year    = {2022},
  doi     = {10.1561/2200000092}
}

@misc{Wadayama2026scvamp,
      title={Score-Based {VAMP} with {Fisher}-Information-Based {Onsager} Correction}, 
      author={Tadashi Wadayama and Takumi Takahashi},
      year={2026},
      eprint={2601.07095},
      archivePrefix={arXiv},
      primaryClass={cs.IT},
      url={https://arxiv.org/abs/2601.07095}, 
}

@misc{Wadayama2026tmvamp,
      title={Three-Module {SC}-{VAMP} for {LDPC}-Coded Nonlinear Channels}, 
      author={Tadashi Wadayama and Takumi Takahashi},
      year={2026},
      eprint={2604.19061},
      archivePrefix={arXiv},
      primaryClass={cs.IT},
      url={https://arxiv.org/abs/2604.19061}, 
}

@inproceedings{Song2019,
  author    = {Y. Song and S. Ermon},
  title     = {Generative Modeling by Estimating Gradients of the Data Distribution},
  booktitle = {Adv. Neural Inf. Process. Syst. (NeurIPS)},
  pages = {11895–11907},
  volume      = {32},
  year      = {2019}
}

@article{Mondelli2022,
  author  = {M. Mondelli and R. Venkataramanan},
  title   = {Approximate Message Passing With Spectral Initialization for Generalized Linear Models},
  journal = {J. Stat. Mech.: Theory Exp.},
  volume  = {2022},
  number  = {11},
  pages   = {114003},
  month   = nov,
  year    = {2022},
  doi     = {10.1088/1742-5468/ac9828}
}

@book{Kay1993,
  author    = {Steven M. Kay},
  title     = {Fundamentals of Statistical Signal Processing: Estimation Theory},
  volume     = {1},
  publisher = {Prentice-Hall PTR},
  address   = {Englewood Cliffs, NJ},
  year      = {1993}
}

@INPROCEEDINGS{Cai2025spawc,
  author={Cai, Chang and Yuan, Xiaojun and Zhang, Ying-Jun Angela},
  booktitle={Proc. IEEE Int. Workshop Signal Process. Artif. Intell. Wireless Commun. (SPAWC)}, 
  title={Score-Based Turbo Message Passing for Plug-and-Play Compressive Image Recovery}, 
  year={2025},
  volume={},
  number={},
  pages={1-5},
  doi={10.1109/SPAWC66079.2025.11143463}}

@ARTICLE{Yu2024,
  author={Yu, Wentao and He, Hengtao and Yu, Xianghao and Song, Shenghui and Zhang, Jun and Murch, Ross and Letaief, Khaled B.},
  journal={IEEE J. Sel. Top. Signal Process.}, 
  title={Bayes-Optimal Unsupervised Learning for Channel Estimation in Near-Field Holographic {MIMO}}, 
  year={2024},
  volume={18},
  number={4},
  pages={714-729},
  doi={10.1109/JSTSP.2024.3414137}}

@ARTICLE{Efron2011,
  author  = {B. Efron},
  title   = {Tweedie's Formula and Selection Bias},
  journal = {J. Amer. Statist. Assoc.},
  volume  = {106},
  number  = {496},
  pages   = {1602--1614},
  year    = {2011},
  doi     = {10.1198/jasa.2011.tm11181}
}

@INPROCEEDINGS{Robbins1956,
  author    = {H. E. Robbins},
  title     = {An Empirical {Bayes} Approach to Statistics},
  booktitle = {Proc. Third Berkeley Symp. Math. Statist. Probab., Vol. 1},
  address   = {Berkeley, CA, USA},
  pages     = {157--163},
  year      = {1956}
}
